\theoremstyle{plain}
\newtheorem{theorem}{Theorem}[section]
\newtheorem{definition}[theorem]{Definition}
\newtheorem{lemma}[theorem]{Lemma}
\newtheorem{proposition}[theorem]{Proposition}
\newtheorem*{CSPdichotomy}{CSP dichotomy conjecture}
\newtheorem*{algdichotomy}{Algebraic CSP dichotomy conjecture}
\newtheorem*{openproblem}{Problem}
\newcommand{\CSP}{\operatorname{CSP}}
\begin{document}

\title[CSP Dichotomy via consistency checks]{Constraint Satisfaction Problem Dichotomy for Finite Templates: a proof via consistency checks}

\author{Dejan Deli\'c}
\address{Department of Mathematics, Ryerson University,  Canada}
\email{ddelic@ryerson.ca}

\thanks{The author gratefully acknowledges support by the Natural Sciences and
Engineering Research Council of Canada in the form of a Discovery Grant.}

\begin{abstract} 
One of the central problems in the study of parametrized constraint satisfaction problems is the Dichotomy Conjecture by T. Feder and M. Vardi stating that the constraint satisfaction problem (CSP) over a fixed, finite constraint language is either solvable in polynomial time or \textsc{NP}-complete. The conjecture was verified in certain special cases (domains with a relatively small number of elements, constraint languages containing all unary relations, etc.)  In this article, we present a proof of the Dichotomy Conjecture via  local consistency and a new consistency notion, the AF-consistency checks. In fact, we show that, for every Taylor domain, which is $(2\lceil\frac{K}{2}\rceil,3\lceil\frac{K}{2}\rceil)$-consistent, where $K$ is the largest arity of a relation in the constraint language, we can define polynomially many proper subinstances such that, the original instance of a CSP is solvable if, and only if, the problem has a solution in one of those subinstances and define the AF-consistency using these subinstance . Finally, the solution is constructed, via a sequence of reductions using absorption and AF-consistency, using the notion of Singleton Linear Arc Consistency (SLAC), as introduced in \cite{Kozik2016}.
\end{abstract}

\maketitle

\section*{Introduction}
\noindent One of the fundamental problems in constraint programming and, more widely, in the field of artificial intelligence, is the problem of understanding the computational complexity of constraint satisfaction problems (CSPs, for short). The problem, in its full generality, is \textsc{NP}-complete but may of its subclasses are tractable and can be solved efficiently using well-established algorithms. One approach to studying the complexity of CSPs is to restrict the instances by allowing a fixed set of constraint relations, which is generally referred to in the literature as a \emph{constraint language} or, a \emph{fixed template}  (\cite{b-j-k}). This particular approach has proved to be very fruitful and has been the driving force in the study of the complexity of constraint satisfaction in the past 15 years or so.

The fixed template approach can be traced back to the 1970s and the work of U. Montanari (\cite{Montanari}) and T. Schaefer (\cite{sch}). The latter work resulted in the first general result in the area, Schaefer's dichotomy for Boolean CSPs. Schaefer proved that CSPs arising from constraint languages over 2-element domains are either solvable in polynomial time or \textsc{NP}-complete. Another landmark result was the dichotomy for finite simple graph templates by P. Hell and J. Ne\v{s}et\v{r}il (\cite{helnes:1}). Their result proves that if a fixed template is a finite simple graph (i.e. the domain is the set of all vertices and the edge relation is the only constraint), then the associated CSP is either solvable in polynomial time or \textsc{NP}-complete.

These seminal results concerning the dichotomy of CSPs over rather specific types of finite templates gave rise to a more general project of classifying the complexity of \emph{all} fixed template CSPs. The main conjecture in the field and the subject of this article is the so-called \emph{CSP Dichotomy Conjecture}, formulated by T. Feder and M. Vardi (\cite{fedvar}). The dichotomy conjecture~\cite{fedvar} can be stated as
follows:

\begin{CSPdichotomy}
  Let $\mathbf{A}$ be a finite relational structure.  
  Then\linebreak
  $\CSP(\mathbf{A})$ is solvable in polynomial time or \textsc{NP}-complete.
\end{CSPdichotomy}

In this article, we provide a proof of this conjecture. 

A big advance in the study of the computational complexity of CSPs was made in the work of P. Jeavons, D. Cohen, and M. Gyssens (\cite{JCG97}) and later extended by the work of A. Bulatov, P. Jeavons, and A. Krokhin (\cite{b-j-k}). The aforementioned articles studied an algebraic connection between fixed template CSPs and their complexity. Namely, one can associate with every finite domain constraint language $\mathbf{A}$ a finite algebraic structure, its \emph{algebra of polymorphisms}. The properties of the algebraic structure obtained in this way directly influence the computational complexity of the constraint language. In particular, if the fixed constraint language is a relational core, if its algebra of polymorphisms does not satisfy a particular equational property, referred to as a \emph{Taylor property} in universal algebra, then the CSP associated with the constraint language is \textsc{NP}-complete.

In \cite{b-j-k}, the authors conjectured that all constraint languages over finite domains whose algebras of polymorphisms are Taylor give rise to CSPs solvable in polynomial time. This conjecture is often referred to as the \emph{Algebraic CSP Dichotomy Conjecture}

\begin{algdichotomy}
  Let $\mathbf{A}$ be a finite relational structure that is a core. If
  the algebra of polymorphisms of 
  $\mathbf{A}$ is Taylor, then $\CSP(\mathbf{A})$ is
  solvable in polynomial time, otherwise $\CSP(\mathbf{A})$ is
  \textsc{NP}-complete.
\end{algdichotomy}

The algebraic approach has subsequently yielded a number of important results. Among others, A. Bulatov \cite{bul3} extended
Schaefer's~\cite{sch} result on 2-element domains to prove the CSP  
dichotomy conjecture for 3-element domains, with further results by other authors extending it to the domains of cardinality $\leq 7$. L. Barto, Kozik and
T. Niven~\cite{b-k-n} extended P. Hell and J. Ne{\v s}et{\v r}il's
result~\cite{helnes:1} on simple graphs to constraint languages
consisting of a finite digraph with no sources and no sinks. Barto and
Kozik~\cite{b-k2} gave a complete algebraic description of the
constraint languages over finite domains that are solvable by local
consistency methods (these problems are said to be of \emph{bounded
  width}) and as a consequence it is decidable to determine whether a
constraint language can be solved by such methods.

\subsection*{Organization of the paper}

In Section 1 we present the basic concepts and definitions related to constraint satisfaction problems, as well as the tools from universal algebra which will be used extensively in the paper. In Section 2, we describe the local consistency notions which are used in the proof of the main result. Section 3 outlines the reduction to the binary case, i.e. the syntactically simple instances and their combined vertex-edge instances. Section 4 contains a proof of the Dichotomy Conjecture using the tools introduced earlier in the paper, along with the new consistency concept, related to solvability of cyclic constraint satisfaction problems, while Section 5 states the algorithm which solves CSPs with finite Taylor constraint languages.

\section{Preliminaries}

\subsection{Constraint Satisfaction Problems}
The central concept of this paper is the one of a non-uniform Constraint Satisfaction Problem:

\begin{definition} An \emph{instance} of the CSP is a triple $\mathcal{I}=(V,A,\mathcal{C})$, where $V=\{x_1,\ldots,x_n\}$ is a finite set of \emph{variables}, $A$ is a finite domain for the variables in $V$, and $\mathcal{C}$ is a finite set of \emph{constraints} of the form $C=(S,R_S)$, where $S$, the \emph{scope} of the constraint, is a $k$-tuple of variables $(x_{i_1},\ldots,x_{i_k})\in V^k$ and $R_S$ is a $k$-ary relation $R_S\subseteq A^k$, called the \emph{constraint relation} of $C$.

A \emph{solution} for the instance $\mathcal{I}$ is any assignment $f:V\rightarrow A$, such that, for every constraint $C=(S,R_S)$ in $\mathcal{C}$, $f(S)\in R_S$.
\end{definition}

A relational structure $\mathbf{A}=(A, \Gamma)$, defined over the domain $A$ of the instance $\mathcal{I}$, where $\Gamma$ is a finite set of relations on $A$, is a \emph{constraint language}, with the relations from $\Gamma$ forming the signature of $\mathbf{A}$. An instance of $\CSP (\mathbf{A})$ is an instance of the CSP such that all constraint relations belong to $\mathbf{A}$. 

We can now formulate the constraint satisfaction problem $\CSP (\mathbf{A})$ as the following decision problem:


We will not dwell on the particular issue of how the relations of $\Gamma$ are represented as a part of the input. It suffices to note that all standard ways of representing the constraint relations in the literature lead to log-space equivalent decision problems.

\subsection{Basic Algebraic Tools}\label{sec:Algebra}
\noindent In this section, we introduce concepts from universal algebra which will be used in the remainder of the paper. For more exhaustive introduction to universal algebra and its applications, see \cite{Burris1981} or \cite{bergman}.

An \emph{algebra} is an ordered pair $\mathbb{A}=(A, F)$, where $A$ is a nonempty set, the \emph{universe} of $\mathbb{A}$, while $F$ is the set of \emph{basic operations} of $\mathbb{A}$, consisting of functions of arbitrary, but finite, arities on $A$. The list of function symbols and their arities is the \emph{signature} of $\mathbb{A}$. 

A \emph{subuniverse} of the algebra $\mathbb{A}$ is a nonempty subset $B\subseteq A$ closed under all operations of $\mathbb{A}$. If $B$ is a subuniverse of $\mathbb{A}$, by restricting all operations of $\mathbb{A}$ to $B$, such a subuniverse is a \emph{subalgebra} of $\mathbb{A}$, which we denote $\mathbb{B}\leq \mathbb{A}$.

If $\mathbb{A}_i$ is an indexed family of algebras of the same signature, the product $\prod_i \mathbb{A}_i$ of the family is the algebra whose universe is the Cartesian products of their universes $\prod_i A_i$ endowed with the basic operations which are coordinatewise products of the corresponding operations in $\mathbb{A}_i$. If $\mathbb{A}$ is an algebra, its $n$-th Cartesian power will be denoted $\mathbb{A}^n$.

An equivalence relation $\alpha$ on the universe $A$ of an algebra $\mathbb{A}$ is a \emph{congruence} of $\mathbb{A}$, if $\alpha \leq \mathbb{A}^2$, i.e. if $\alpha$ is preserved by all basic operations of $\mathbb{A}$. In that case, one can define the algebra $\mathbb{A}/\alpha$, the \emph{quotient of} $\mathbb{A}$ \emph{by} $ \alpha$, with the universe consisting of all equivalence classes (cosets) in $A/\alpha$ and whose basic operations are induced by the basic operations of $\mathbb{A}$. The $\alpha$-congruence class containing $a\in A$ will be denoted $a/\alpha$.

An algebra $\mathbb{A}$ is said to be \emph{simple} if its only congruences are the trivial, diagonal relation $0_\mathbb{A}=\{(a,a)\, \vert \, a\in A\}$ and the full relation $1_\mathbb{A}=\{ (a,b)\, \vert\, a,b\in A\}$. It is a well-known fact (see e.g. \cite{Burris1981}) that the congruences of $\mathbb{A}$ form a lattice $Con(\mathbb{A})$; namely, for any $\alpha,\beta\in Con(\mathbb{A})$, $\alpha\wedge\beta$ is the intersection of $\alpha$ and $\beta$, while $\alpha\vee \beta$ is the smallest congruence containing both $\alpha$ and $\beta$.

Any subalgebra of a Cartesian product of algebras $\mathbb{A}\leq \prod_i \mathbb{A}_{i\in I}$ is equipped with a family of congruences arising from projections on the product coordinates. We denote $\pi_i$ the congruence obtained by identifying the tuples in $A$ which have the same value in the $i$-th coordinate. Given any $J\subseteq I$, we can define a subalgebra of $\mathbb{A}$, $proj_J(\mathbb{A})$, which consists of the projections of all tuples in $A$ to the coordinates from $J$. If $\mathbb{A}\leq \prod_{i\in I} \mathbb{A}_i$ is such that $proj_i (\mathbb{A})=\mathbb{A}_i$, for every $i\in I$, we say that $\mathbb{A}$ is a \emph{subdirect product} and denote this fact $\mathbb{A}\leq_{sp} \prod_{i\in I} \mathbb{A}_i$.

If $\mathbb{A}$ and $\mathbb{B}$ are two algebras of the same signature, a mapping from $A$ to $B$ which preserves all basic operations is a \emph{homomorphism}. An \emph{isomorphism} is a bijective homomorphism between two algebras of the same signature.

Given an algebra $\mathbb{A}$, a \emph{term} is a syntactical object describing a composition of basic operations of $\mathbb{A}$. A \emph{term operation} $t^\mathbb{A}$ of $\mathbb{A}$ is the interpretation of the syntactical term $t(x_1,\ldots,x_m)$ as an $m$-ary operation on $A$, according to the formation tree of $t$.

A \emph{variety} is a class of algebras of the same signature, which is closed under the class operators of taking products, subalgebras, and homomorphic images (or, equivalently, under the formation of quotients by congruence relations.) The variety $\mathcal{V}(\mathbb{A})$ generated by the algebra $\mathbb{A}$ is the smallest variety containing $\mathbb{A}$. Birkhoff's theorem states (see \cite{Burris1981}) states that every variety is an equational class; that is, every variety $\mathcal{V}$ is uniquely determined by a set of identities (equalities of terms) $s\approx t$ so that $\mathbb{A}\in\mathcal{V}$ if and only if $\mathbb{A}\models s\approx t$, for every identity $s\approx t$ in the set.

\subsection{Homomorphisms, cores and polymorphisms}\label{sec:homs}
\noindent In order to be able to fully utilize the power of the algebraic approach to studying the complexity of CSPs, in this subsection we outline the connection between the constraint satisfaction problems on finite relational templates and their algebraic parametrization. We begin with the notion of a relational structure
homomorphism.

An \emph{$n$-ary operation} on a set $A$ is simply a mapping
$f:A^n\rightarrow A$; the number $n$ is the \emph{arity} of $f$.  Let
$f$ be an $n$-ary operation on $A$ and let $k>0$. We write $f^{(k)}$
to denote the $n$-ary operation obtained by applying $f$ coordinatewise on
$A^k$. That is, we define the $n$-ary operation $f^{(k)}$ on $A^k$ by
\[
f^{(k)}(\mathbf a^1,\dots,\mathbf
a^n)=(f(a^1_1,\dots,a^n_1),\dots,f(a^1_k,\dots,a^n_k)),
\]
for $\mathbf a^1,\dots, \mathbf a^n\in A^k$.

\begin{definition}\label{def:hom}
  Let $\mathbf A$ and $\mathbf B$ be relational structures in the same
  signature~$\Gamma$. A \emph{homomorphism} from $\mathbf A$ to
  $\mathbf B$ is a mapping $\varphi$ from $A$ to $B$ such that for
  each $k$-ary relation symbol $R$ in $\Gamma$ and each $k$-tuple
  $\mathbf{a}\in A^k$, if $\mathbf{a}\in R^\mathbf A$, then
  $\varphi^{(k)}(\mathbf{a})\in R^\mathbf B$.
\end{definition}

We write $\varphi:\mathbf A\to\mathbf B$ to mean that $\varphi$ is a
homomorphism from $\mathbf A$ to $\mathbf B$, and $\mathbf A\to\mathbf
B$ to mean that there exists a homomorphism from $\mathbf A$ to
$\mathbf B$. 

An \emph{isomorphism} is a bijective homomorphism $\varphi$ such that
$\varphi^{-1}$ is also a homomorphism. A homomorphism $\mathbf A\to\mathbf
A$ is called an \emph{endomorphism}. 

A finite relational structure $\mathbf A'$ is a \emph{core} if every
endomorphism $\mathbf A'\to\mathbf A'$ is surjective. For every $\mathbf A$ there exists a relational
structure $\mathbf A'$ such that $\mathbf A\to\mathbf A'$ and $\mathbf
A'\to\mathbf A$ and $\mathbf A'$ is of minimum size with respect to these
properties; that structure $\mathbf A'$ is called the \emph{core of
  $\mathbf A$}. The core of $\mathbf A$ is unique (up to isomorphism)
and $\CSP(\mathbf A)$ and $\CSP(\mathbf A')$ are the same decision
problems. Equivalently, the core of $\mathbf A$ can be defined as an induced substructure of minimum size that $\mathbf A$ retracts onto. (See~\cite{helnes} for details on cores for graphs, cores for
relational structures are a natural generalization.)

The notion of \emph{polymorphism} is central in the 
so-called 
algebraic approach to the $\CSP$. Polymorphisms are a natural
generalization of endomorphisms to higher arity operations.

\begin{definition}
  Given an $\Gamma$-structure $\mathbf{A}$, an $n$-ary
  \emph{polymorphism} of $\mathbf{A}$ is an $n$-ary operation $f$ on
  $A$ such that $f$ preserves the relations of $\mathbf A$. That is,
  if $\mathbf{a}^1,\dots,\mathbf{a}^n\in R$, for some $k$-ary relation
  $R$ in $\Gamma$, then $f^{(k)}(\mathbf a^1,\dots,\mathbf
  a^n)\in R$.  
\end{definition}
Thus, an endomorphism is a unary polymorphism. Polymorphisms satisfying certain identities has been used extensively 
in the algebraic study of CSPs.

Furthermore, if a relational structure $\mathbf{A}$ is a core, one can construct a structure $\mathbf{A}'$ from $\mathbf{A}$ by adding, for each element $a\in A$, a unary constraint relation $\{a\}$. This enables us to further restrict the algebra of polymorphisms associated with the template; namely, if $f(x_1,\ldots,x_m)$ is an $m$-ary polymorphism of $\mathbf{A}'$, it is easy to see that
$f(a,a,\ldots,a)=a,$
for all $a\in A$. In addition to this, the constraint satisfaction problems with the templates $\mathbf{A}$ and $\mathbf{A}'$ are log-space equivalent. Therefore, we may assume that the algebra of polymorphisms associated to any CSP under consideration is \emph{idempotent}; i.e. all its basic operations $f$ satisfy the identity
$$f(x,x,\ldots,x)\approx x.$$

\subsection{Taylor Algebras}

One of the great accomplishments of the algebraic approach, even at its early stages, was strong evidence that the algebraic parametrization $\mathbb{A}$ fully determines the computational complexity of the associated problem $\CSP \mathbf{A})$. The algebraic version of the Dichotomy Conjecture speculates that there is a strict dividing line between tractable and \textsc{NP}-complete problems: if there is a two-element quotient algebra of a subalgebra of $\mathbb{A}$ all of whose operations are projections, then $\CSP (\mathbf{A})$ is \textsc{NP}-complete; otherwise, $\CSP (\mathbf{A})$ is solvable in polynomial time. The hardness part of the Algebraic Dichotomy Conjecture is known to be true:

\begin{theorem}(\cite{b-j-k}) Let $\mathbf{A}$ be a finite relational template which contains all constant unary relations and let $\mathbb{A}$ be its algebra of polymorphisms. If $\mathbb{A}$ contains a subalgebra with a two-element quotient algebra whose only operations are projections, then $\CSP (\mathbf{A})$ is \textsc{NP}-complete. In fact, \textsc{3-SAT} can be polynomially reduced to it.
\end{theorem}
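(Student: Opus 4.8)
The plan is to turn the algebraic hypothesis into a genuine reduction of decision problems by means of the Galois correspondence between relations and polymorphisms. Recall from \cite{b-j-k} that a relation $R$ on $A$ is primitive positive definable from the relations of $\mathbf{A}$ if and only if $R$ is invariant under every polymorphism of $\mathbf{A}$, i.e. under every operation of $\mathbb{A}$; and that whenever $R$ is primitive positive definable from $\mathbf{A}$, the problem $\CSP(\mathbf{A}\cup\{R\})$ reduces to $\CSP(\mathbf{A})$ in log-space (replace each $R$-constraint by its defining conjunction, turning existentially quantified variables into fresh CSP variables). It therefore suffices to produce finitely many relations, each invariant under $\mathbb{A}$, whose joint constraint satisfaction problem is \textsc{3-SAT}.

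First I would extract the data furnished by the hypothesis: a subalgebra $\mathbb{B}\leq\mathbb{A}$ and a congruence $\theta\in\operatorname{Con}(\mathbb{B})$ with $|\mathbb{B}/\theta|=2$, all of whose operations are projections; write $0,1$ for the two $\theta$-classes. Since $B$ is a subuniverse it is closed under every polymorphism of $\mathbf{A}$, and since $\theta$ is a congruence of $\mathbb{B}$ it is preserved by the restriction to $B$ of every polymorphism; thus each polymorphism $f$ of $\mathbf{A}$ restricts to $B$ and descends to an operation $\bar{f}$ on $\mathbb{B}/\theta$, which by hypothesis is a projection.

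Next I would pull back the relations that encode \textsc{3-SAT}. A clause on three literals is satisfied by all but one of the eight Boolean assignments, so it corresponds to a ternary relation $S\subseteq\{0,1\}^3$ equal to $\{0,1\}^3$ with one triple deleted; there are finitely many such $S$, and the constraint satisfaction problem over them is exactly \textsc{3-SAT}. For each $S$ set $\tilde{S}=\{(a_1,a_2,a_3)\in B^3 : (a_1/\theta,a_2/\theta,a_3/\theta)\in S\}$. The crucial step is to check that $\tilde{S}$ is invariant under $\mathbb{A}$: applying any $n$-ary polymorphism $f$ coordinatewise to tuples of $\tilde{S}$ keeps the entries in $B$, and modulo $\theta$ the $i$-th entry of the result is $\bar{f}$ applied to the $i$-th entries of the arguments; as $\bar{f}$ is a projection, say onto coordinate $p$, the result modulo $\theta$ equals the class-tuple of the $p$-th argument, which lies in $S$. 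Hence $\tilde{S}$ is invariant under $\mathbb{A}$ and so primitive positive definable from $\mathbf{A}$.

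Finally I would assemble the reduction. Given a \textsc{3-SAT} instance on variables $y_1,\dots,y_m$, introduce one CSP variable $x_i$ ranging over $A$ for each $y_i$, and replace each clause by the constraint $\tilde{S}$ of the appropriate type on the relevant triple of $x_i$'s; by the previous paragraph this is a log-space reduction into $\CSP(\mathbf{A})$. Any CSP solution forces every $x_i$ into $B$, and the $\theta$-class of $x_i$ yields a satisfying truth assignment, while conversely a satisfying assignment lifts by choosing arbitrary class representatives. The main obstacle is less the bookkeeping of this reduction than the apparatus it rests on: one must invoke the equivalence between invariance under $\mathbb{A}$ and primitive positive definability, together with the fact that primitive positive definitions give complexity-preserving reductions, and then carry out the invariance verification for $\tilde{S}$ carefully enough that the induced projection on the two-element quotient does the work of collapsing the polymorphisms.
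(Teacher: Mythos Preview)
The paper does not give its own proof of this theorem; it merely states the result with a citation to \cite{b-j-k} and immediately moves on, so there is nothing to compare your argument against. Your proposal is essentially the standard proof from that reference: pull back the Boolean clause relations along the quotient map $B\to B/\theta$, verify invariance under $\Pol(\mathbf{A})$ using that every induced operation on the two-element quotient is a projection, invoke the Inv--Pol Galois connection to get pp-definability, and compose with the usual pp-definition-to-reduction lemma. The invariance computation is correct (the key point being that the projection index $p$ of $\bar f$ is the same in all three coordinates, so the result modulo $\theta$ is exactly the $p$-th input tuple). One minor remark: your claim that ``any CSP solution forces every $x_i$ into $B$'' relies on every 3-SAT variable occurring in at least one clause; this is harmless, but you could alternatively note that $B$ itself is invariant under $\mathbb{A}$ and hence pp-definable, and impose it as a unary constraint on each variable.
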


The algebras which fail the assumptions of the theorem and which are conjectured to give rise to tractable problems are called \emph{Taylor algebras}. Therefore, an algebra $\mathbb{A}$ is Taylor if no subalgebra of $\mathbb{A}$ has a two-element quotient algebra whose only operations are projections. 

Taylor algebras can be characterized in a variety of equational ways. For our purposes, besides the specific assumption that an algebra be Taylor, such characterizations will be irrelevant. However, we prefer to state the following theorem which characterizes Taylor algebras in terms of equational logic.

\begin{theorem}(\cite{maroti-mckenzie}) Let $\mathbb{A}$ be a finite idempotent algebra. Then, the following are equivalent:

\begin{itemize}
\item $\mathbb{A}$ is a Taylor algebra.
\item $\mathbb{A}$ has a $k$-ary \emph{weak near-unanimity} operation, for some $k\geq 3$; i.e. a $k$-ary operation satisfying
$$f(x,x,\ldots,,x,y)\approx f(x,x,\ldots,x,y,x)\approx\ldots\approx f(y,x,x,\ldots,x).$$
\end{itemize}
\end{theorem}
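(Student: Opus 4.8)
The plan is to establish the two implications separately, the forward implication ($\mathrm{WNU}\Rightarrow$ Taylor) being routine and the converse (Taylor $\Rightarrow\mathrm{WNU}$) being the substance of the argument.

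For the forward implication, suppose $\mathbb{A}$ carries a $k$-ary weak near-unanimity term $f$ with $k\ge 3$. I would first note that $f$ is a Taylor operation: the identity $f(y,x,\dots,x)\approx f(x,y,x,\dots,x)$ singles out the first coordinate, and for every $i\ge 2$ the identity $f(y,x,\dots,x)\approx f(x,\dots,x,y,x,\dots,x)$, with $y$ placed in coordinate $i$ on the right, singles out coordinate $i$; thus each coordinate is nontrivially moved. To pass from this to the definition of Taylor used here, I would argue by contradiction. If some subalgebra $\mathbb{B}\le\mathbb{A}$ admitted a congruence $\theta$ with $\mathbb{B}/\theta$ a two-element algebra whose only operations are projections, then $f$ restricts to a term operation of $\mathbb{B}$ and, since the weak near-unanimity identities are preserved under subalgebras and quotients, it induces a term $\bar f$ of $\mathbb{B}/\theta$ satisfying the same identities. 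But $\bar f$ would be a projection $\pi_i$, and evaluating the pattern with $y$ in coordinate $i$ gives $y$, while placing $y$ in any other coordinate gives $x$; as $x\ne y$ in a two-element algebra, this contradicts the weak near-unanimity identities. Hence no such quotient exists and $\mathbb{A}$ is Taylor.

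For the converse I would split the work into a classical part and the genuinely new part. The classical part is the theorem of Taylor stating that a finite idempotent algebra with no two-element projection quotient of a subalgebra possesses a \emph{Taylor term} $t(x_1,\dots,x_n)$: an idempotent term together with a system of identities $t(\cdots)\approx t(\cdots)$, one for each coordinate, each flipping a single entry between $x$ and $y$ in that coordinate. Taking such a $t$ as given, the remaining and harder task is to manufacture from $t$ (and the other term operations of the finite algebra) one single operation $w$ whose evaluations on the patterns with exactly one $y$ all coincide.

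The route I would take here is symmetrization, carried out inside the two-generated free algebra $\mathbb{F}=F_{\mathcal V(\mathbb{A})}(x,y)$. For an $m$-ary candidate term $w$ I would track the tuple $(w(e_1),\dots,w(e_m))\in\mathbb{F}^m$ of its single-$y$ values, where $e_i$ denotes the pattern with $y$ in coordinate $i$ and $x$ elsewhere; by freeness, $w$ is a weak near-unanimity term exactly when these coordinates are equal in $\mathbb{F}$. Starting from the Taylor term, I would repeatedly substitute $t$ into itself and compose with coordinate permutations, using idempotency to fix the all-$x$ value and the Taylor identities to prevent any coordinate from acting as a dictator, so as to shrink the set of distinct single-$y$ values. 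The crux, and the step I expect to be the main obstacle, is precisely this collapse: showing that iterating the construction over the finite algebra must eventually yield a term all of whose single-$y$ values agree. This is where finiteness is indispensable and where the Taylor hypothesis does the real work, since the only possible obstruction to collapsing a coordinate is that coordinate behaving like a projection onto a two-element section, which the Taylor condition forbids. Assembling this fixed-point/pigeonhole argument and checking that the resulting $w$ is idempotent of some arity $\ge 3$ and simultaneously satisfies all the weak near-unanimity identities completes the proof.
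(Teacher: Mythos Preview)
The paper does not contain a proof of this theorem at all: it is stated with a citation to Mar\'oti--McKenzie and used as a black box, so there is no ``paper's own proof'' to compare your attempt against.

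As for the content of your proposal: the forward implication is fine and essentially complete. The converse, however, has a genuine gap precisely where you flag it. You reduce correctly to the two-generated free algebra $\mathbb{F}$ and to tracking the tuple of single-$y$ values, but the passage ``repeatedly substitute $t$ into itself and compose with coordinate permutations \dots\ so as to shrink the set of distinct single-$y$ values'' is not an argument; it is a hope. Nothing you have written forces the set of values to shrink under any particular iteration scheme, and the sentence ``the only possible obstruction to collapsing a coordinate is that coordinate behaving like a projection onto a two-element section'' is not justified --- the Taylor identities rule out each coordinate being a projection, but they do not by themselves say that an arbitrary non-projection substitution must identify two previously distinct single-$y$ values. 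The actual Mar\'oti--McKenzie argument is not a naive symmetrization: it passes through a specific combinatorial construction (iterated ``star composition'' of a Taylor term with itself at carefully chosen arities, combined with a pigeonhole argument on $\mathbb{F}$ that exploits finiteness to force eventual periodicity and then uses the Taylor identities in a precise way to obtain equality of all cyclic shifts). Your outline gestures at the right objects but does not supply the mechanism that makes the collapse happen; as written it would not go through.
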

 
\subsection{Absorption}\label{absorption}

One of the key notions which has emerged in recent years as an important tool in the algebraic approach to the study of CSPs with finite templates is the one of absorption. It has played a crucial role in the proof of the Bounded Width Conjecture and its refinements (see \cite{b-k1} , \cite{b-k2},  \cite{Kozik2016}) but its primary strength is in its applicability outside the context of congruence meet-semidistributivity.

If $\mathbb{A}$ and $\mathbb{B}$ are idempotent algebras such that $\mathbb{B}\leq \mathbb{A}$, we say that $\mathbb{B}$ \emph{absorbs} $\mathbb{A}$ and write it as $\mathbb{B}\unlhd \mathbb{A}$ if there exists a term $t$
such that
$$t(B,B,\ldots, B,A,B,\ldots, B)\subseteq B,$$
regardless of the placement of $A$ in the list of variables of the term.

A direct consequence of the definition is the following fact: if $\mathbb{A},\mathbb{B},\mathbb{A}'$ and $\mathbb{B}'$ are algebras of the same signature such that $\mathbb{B}\unlhd\mathbb{A}$ and $\mathbb{B}'\unlhd\mathbb{A}'$, then both absorptions can be witnessed by the same term. 

Subdirect products of a pair of algebras give rise to pairs of congruences which will be used in the course of the paper in order to prove the so-called ``rectangulation" properties of powers of simple algebras.  

\begin{proposition}\label{absorb} Let $\mathbb{R}\leq_{sp} \mathbb{A}\times \mathbb{B}$.
\begin{enumerate}
\item The binary relation $\alpha$ defined on $A$ by
$$(a,a')\in\alpha  \mbox{ if and only if  there exists $b\in B$ such that } (a,b),(a,b')\in C$$
is a congruence of $\mathbb{A}$. The analogous statement is true of the dual relation $\beta$ defined on $B$.
\item If $\mathbb{C}'\unlhd \mathbb{C}\leq_{sp} \mathbb{A}\times\mathbb{B}$ and $\mathbb{C}'\leq_{sp}\mathbb{A}\times\mathbb{B}$, if $\alpha'$ and $\beta'$ are the pair of congruences defined on $A'$ and $B'$, respectively, as in (1), then $\alpha=\alpha'$ and $\beta=\beta'$.
\end{enumerate}
\end{proposition}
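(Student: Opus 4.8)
The plan is to treat the two parts separately: part (1) by direct verification, and part (2) by reducing the claim to a statement about absorbing subalgebras that are graphs of isomorphisms.

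For part (1), I would first show that the defining relation is a tolerance and then close it up. Writing $\alpha_0$ for the relation $\{(a,a') : (a,b),(a',b)\in\mathbb{R}\text{ for some }b\in B\}$, reflexivity is immediate from subdirectness (every $a\in A$ occurs in some pair of $\mathbb{R}$), and symmetry is built into the definition. For compatibility, given an $m$-ary term operation $f$ and pairs $(a_i,a_i')\in\alpha_0$ witnessed by $b_i$, I would apply $f$ coordinatewise: since $\mathbb{R}$ is a subalgebra, both $(f(a_1,\dots,a_m),f(b_1,\dots,b_m))$ and $(f(a_1',\dots,a_m'),f(b_1,\dots,b_m))$ lie in $\mathbb{R}$ and share the second coordinate, so $f$ returns the pairs into $\alpha_0$. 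Thus $\alpha_0$ is a reflexive, symmetric, compatible relation; taking $\alpha$ to be its transitive closure yields an equivalence relation that remains compatible (a chain of single steps is transported stepwise by any operation), hence a congruence of $\mathbb{A}$. The argument for $\beta$ is verbatim with the coordinates exchanged.

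For part (2), the inclusions $\alpha'\subseteq\alpha$ and $\beta'\subseteq\beta$ are immediate, since every witness for a single $\alpha_0'$-step already lies in $\mathbb{C}'\subseteq\mathbb{C}$ and hence witnesses the corresponding $\alpha_0$-step. The substance is the reverse inclusion, and this is where absorption enters. I would pass to the quotient by $\alpha'\times\beta'$: setting $\mathbb{P}=\mathbb{A}/\alpha'$ and $\mathbb{Q}=\mathbb{B}/\beta'$, the image $\overline{\mathbb{C}'}=\mathbb{C}'/(\alpha'\times\beta')$ is, by the very construction of $\alpha',\beta'$ as the linking congruences of $\mathbb{C}'$ from part (1), the graph $G$ of an isomorphism $\phi\colon\mathbb{P}\to\mathbb{Q}$, while $\overline{\mathbb{C}}=\mathbb{C}/(\alpha'\times\beta')\leq_{sp}\mathbb{P}\times\mathbb{Q}$. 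Since absorption passes to homomorphic images with the same witnessing term (the remark following the definition of absorption), we retain $G\unlhd\overline{\mathbb{C}}$. Unwinding the definitions, $\alpha\subseteq\alpha'$ is exactly the assertion that $\overline{\mathbb{C}}$ has trivial linking congruence on $\mathbb{P}$, i.e. that the absorbing graph $G$ forces $\overline{\mathbb{C}}=G$.

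The main obstacle is precisely this last claim: that a bijection graph $G\unlhd\mathbb{D}\leq_{sp}\mathbb{P}\times\mathbb{Q}$ forces $\mathbb{D}$ to be linking-trivial. A single application of the absorbing term does not suffice — combining a putative extra pair $(p,q)\in\mathbb{D}$ with copies of $G$-elements and projecting only reproduces identities satisfied by the term rather than a contradiction. I therefore expect to need the genuine strength of the absorption machinery: building long tuples and exploiting idempotency (or a cyclic term) to iterate the absorbing step, so that any nontrivial link in $\mathbb{D}$ would propagate to two elements of $G$ sharing a coordinate, contradicting that $G$ is the graph of a bijection. This is where the real work lies; once it is in place, the equalities $\alpha=\alpha'$ and $\beta=\beta'$ follow symmetrically.
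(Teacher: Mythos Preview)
The paper states this proposition without proof, treating it as a known background fact from the absorption literature; there is therefore no proof in the paper to compare against. Your outline is sound and would go through.

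Your part (1) is the standard argument and is fine as written.

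For part (2), your reduction to the quotient is correct and is the natural move: after dividing by $\alpha'\times\beta'$ you are left with the graph $G$ of an isomorphism absorbing $\overline{\mathbb{C}}$, and the desired equality $\alpha=\alpha'$ becomes the claim that this forces $\overline{\mathbb{C}}=G$. You flag this last step as the main obstacle and anticipate needing iteration or cyclic terms; in fact it is much easier than you suggest. Identify $\mathbb{Q}$ with $\mathbb{P}$ via the isomorphism, so that $G$ becomes the diagonal $\Delta_\mathbb{P}$ and $\Delta_\mathbb{P}\unlhd \mathbb{D}$ via some $n$-ary term $t$. Take any $(a,b)\in\mathbb{D}$. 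Placing $(a,b)$ in coordinate $i$ and arbitrary diagonal pairs $(x_j,x_j)$ elsewhere, absorption gives
\[
t(x_1,\dots,x_{i-1},a,x_{i+1},\dots,x_n)=t(x_1,\dots,x_{i-1},b,x_{i+1},\dots,x_n)
\]
for every $i$ and every choice of the $x_j$. Now start from $t(a,\dots,a)=a$ (idempotency) and replace the arguments by $b$ one coordinate at a time using the displayed identities; after $n$ steps you reach $t(b,\dots,b)=a$. But idempotency also gives $t(b,\dots,b)=b$, hence $a=b$ and $\mathbb{D}=\Delta_\mathbb{P}$. No iteration of the absorbing term and no appeal to Taylor or cyclic terms is needed; a single use of absorption in each coordinate plus idempotency suffices.
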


We will refer to the congruences $\alpha$ and $\beta$, defined as in Part (1) of the Proposition \ref{absorb} , as the \emph{linkedness} congruences on $\mathbb{A}$ and $\mathbb{B}$ induced by $\mathbb{C}$. We say that $\mathbb{A}$ and $\mathbb{B}$ are \emph{linked} if $\alpha=1_\mathbb{A}$ and $\beta=1_\mathbb{B}$ or, equivalently, if $\pi_1\vee\pi_2=1_\mathbb{C}$. If $\alpha=0_\mathbb{A}$ and $\beta=0_\mathbb{B}$, the subdirect product is the graph of an isomorphism between the algebras $\mathbb{A}$ and $\mathbb{B}$.

For Taylor algebras, linked subdirect products satisfy the following property:

\begin{theorem} (L. Barto, M. Kozik, \cite{b-k2}) \label{AbsThm} Let $\mathbb{C}\leq_{sp} \mathbb{A}\times\mathbb{B}$ be a Taylor algebra. If $\mathbb{C}$ is linked then
\begin{itemize}
\item $\mathbb{C}=\mathbb{A}\times\mathbb{B}$, or
\item $\mathbb{A}$ has a proper absorbing subalgebra, or
\item $\mathbb{B}$ has a proper absorbing subalgebra.
\end{itemize}
\end{theorem}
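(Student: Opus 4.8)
The plan is to establish the trichotomy by contradiction, assuming $\mathbb{C}\neq\mathbb{A}\times\mathbb{B}$ together with the failure of both absorption conclusions, and deriving a contradiction. I would frame the whole argument as an induction on $|A|+|B|$, choosing among all counterexamples one for which this quantity is minimal and, secondarily, for which $|A\times B|-|C|$ is minimal (so that $\mathbb{C}$ is as large as possible). The reason for the inductive setup is that both linkedness and absorption behave well under passing to subalgebras of the factors and to relations of the form $C\cap(A'\times B')$: by Proposition \ref{absorb}(2) the linkedness congruences are insensitive to replacing $\mathbb{C}$ by an absorbing subrelation that is still subdirect, so a minimal counterexample can be assumed to carry no nontrivial proper subalgebras or congruences on the factors that would let me invoke the inductive hypothesis prematurely.

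The central tool is the equational strength of the Taylor hypothesis. Since $\mathbb{C}$ is Taylor, so are its factors $\mathbb{A}$ and $\mathbb{B}$ (Taylor-ness is preserved by the variety operations), and by the Mar\'oti--McKenzie characterization each has a weak near-unanimity operation; I would upgrade this to a \emph{cyclic} term $t$ of some prime arity $p>\max(|A|,|B|)$, i.e.\ a term satisfying $t(x_1,\dots,x_p)\approx t(x_2,\dots,x_p,x_1)$, whose existence for Taylor algebras is guaranteed by the cyclic terms theorem of \cite{b-k2}. Linkedness, $\pi_1\vee\pi_2=1_\mathbb{C}$, means precisely that the reflexive symmetric relation ``having a common neighbour in $C$'' generates the full congruence on each factor; concretely, any two elements of $A$ are joined by a \emph{linking path} that zig-zags through $C$ and $C^{-1}$, and by padding I may assume a uniform path length $\ell$.

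The heart of the argument is to exploit this connectivity through the cyclic term. Given a linking path of length $\ell$ witnessing $a\sim a'$ in $A$, I would apply $t$ (interpreted on $\mathbb{C}$ and its factors) to the $p$ cyclic shifts of the path; cyclicity of $t$ collapses the shifted paths to a single new relational fact, in effect producing a \emph{shortcut} edge or a loop. Running this over all paths, the set of elements reachable from a fixed basepoint stabilises to a subuniverse $D\le\mathbb{A}$ on the first factor, and dually $E\le\mathbb{B}$ on the second. The dichotomy is then: if $D=A$ and $E=B$, the same connectivity forces $C=A\times B$, contradicting our assumption; otherwise $D$ (or $E$) is proper, and the cyclic term, reorganised into a term that treats the ``off'' coordinate uniformly, witnesses $D\unlhd\mathbb{A}$ (resp.\ $E\unlhd\mathbb{B}$), contradicting the standing assumption that no proper absorbing subalgebra exists.

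The main obstacle is exactly this last extraction step: converting the combinatorial failure ``the cyclic term cannot reach all of $A\times B$'' into an \emph{honest} absorbing subalgebra witnessed by a single term that works in \emph{every} coordinate position, as the definition of $\unlhd$ demands. This splits into two delicate points: (i) arranging the linking paths to a common length and into the correct $p$-fold cyclic pattern so that the symmetry of $t$ can be applied, and (ii) verifying that the stabilised set $D$ is genuinely closed under all operations, and absorbed by a term rather than merely fixed by $t$ at one argument. The inductive hypothesis enters here to exclude the degenerate case in which $D$ itself would split further, while Proposition \ref{absorb}(2) ensures that restricting to the absorbing subalgebra leaves the linkedness congruences unchanged, keeping the reduction internally consistent. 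I expect the bookkeeping in (i)--(ii) to be the most technical portion of the proof, whereas the structural skeleton---cyclic term, linking connectivity, and minimal-counterexample induction---is straightforward to lay out.
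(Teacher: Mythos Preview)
The paper does not prove Theorem~\ref{AbsThm}; it is stated with attribution to Barto and Kozik~\cite{b-k2} and used as a black box, so there is no in-paper proof to compare your proposal against.

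For what it is worth, your sketch does track the original Barto--Kozik strategy: a minimal-counterexample setup, a cyclic term of large prime arity obtained from the Taylor hypothesis, linkedness rendered as uniform-length zig-zag paths, and the application of the cyclic term to shifted paths to either fill out $A\times B$ or produce a proper absorbing subalgebra. You are also right that the ``extraction step''---turning the failure to reach all of $A\times B$ into an absorbing subalgebra witnessed by a \emph{single} term valid in every coordinate---is where the real work lies; in your write-up this remains a promise rather than an argument. Two smaller points: your invocation of Proposition~\ref{absorb}(2) only controls what happens when you replace $\mathbb{C}$ by an absorbing subrelation $\mathbb{C}'\unlhd\mathbb{C}$ that is still subdirect, which is not quite the same as restricting a factor $\mathbb{A}$ to a subalgebra, so its role in your inductive reduction needs to be spelled out; and the secondary minimality on $|A\times B|-|C|$ is not obviously compatible with the operations you perform (passing to $C\cap(A'\times B')$ can increase this quantity), so you should check that it actually delivers the extremal properties you want.
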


\subsection{Simple Idempotent Algebras}

Let $\mathbb{A}$ be an algebra. We say that $0\in A$ is an \emph{absorbing element} for $\mathbb{A}$ if, for every $(k+1)$-ary term operation $t(x,\bar{y})$, such that $t^\mathbb{A}$ depends on the variable $x$, the following holds for every $\bar{a}\in A^k$:
$$t^\mathbb{A}(0,\bar{a})=0.$$
We remark here that the property of being an absorbing element is stronger than the requirement that $\{0\}$ be an absorbing subuniverse of $\mathbb{A}$.

Given any finite power of an algebra $\mathbb{A}$, say $\mathbb{A}^n$, for $n\geq 2$, and any $n$ congruences $\theta_1,\theta_2,\ldots,\theta_n\in Con(\mathbb{A})$, the binary relation defined on $A^n$ by
$$((a_1,a_2,\ldots,a_n),(b_1,b_2,\ldots,b_n))\in \theta_1\times\theta_2\times\ldots\times\theta_n$$
if and only if $(a_i,b_i)\in \theta_i$, for all $i=1,\ldots,n$, is a congruence on $\mathbb{A}^n$. Therefore,
$$Con(\mathbb{A}_1)\times Con(\mathbb{A}_2)\times\ldots\times Con(\mathbb{A}_n)\subseteq Con(\mathbb{A}^n).$$
We say that a simple algebra $\mathbb{A}$ is \emph{congruence skew-free} if the equality holds, i.e. if 
$$Con(\mathbb{A}^n)\cong \mathbf{2}^n,$$
for every $n\geq 1$, where $\mathbf{2}$ is a two-element lattice.

The crux of our proof of the Dichotomy Conjecture lies in the analysis of subdirect products of simple absorption-free idempotent algebras. The following theorem provides the key to understaning the aforementioned subdirect products:

\begin{theorem} (K. Kearnes, \cite{Kearnes}) If $\mathbb{A}$ is an idempotent simple algebra, then exactly one of the following conditions is true:

\begin{enumerate}
\item $\mathbb{A}$ has a unique absorbing element.
\item $\mathbb{A}$ is Abelian.
\item $\mathbb{A}$ is congruence skew-free.
\end{enumerate}
\end{theorem}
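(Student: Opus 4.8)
The plan is to route the whole argument through tame congruence theory. Since $\mathbb{A}$ is simple, its congruence lattice is the single prime quotient $\langle 0_\mathbb{A},1_\mathbb{A}\rangle$, to which tame congruence theory assigns a type $\operatorname{typ}(0_\mathbb{A},1_\mathbb{A})\in\{1,2,3,4,5\}$. I would organize the trichotomy by the commutator: because $\mathbb{A}$ is simple, $[1_\mathbb{A},1_\mathbb{A}]$ is either $0_\mathbb{A}$ (the \emph{abelian} case, which coincides with types $1$ and $2$) or $1_\mathbb{A}$ (the \emph{non-abelian} case, types $3$, $4$, $5$). Before matching types to the three alternatives, I would first verify that the alternatives are pairwise exclusive, as this is comparatively routine and pins down the target in each branch.

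For the abelian case I would show it is exactly alternative (2) and excludes the other two. That it excludes (1) follows from the term condition: if $0$ were absorbing and $t$ a term depending on its first variable, then $t(0,\bar u)=0=t(0,\bar v)$ would force $t(c,\bar u)=t(c,\bar v)$ for every $c$ and all $\bar u,\bar v$, which together with idempotency collapses $\mathbb{A}$ to a single element. That it excludes (3) follows because the square always carries a congruence outside $\mathbf 2^2$: for type $2$ the induced module structure supplies the ``difference'' congruence on $\mathbb{A}^2$, while for type $1$ every equivalence relation on the underlying set is a congruence, so $\operatorname{Con}(\mathbb{A}^2)$ is far larger than $\mathbf 2^2$. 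Since every abelian simple idempotent algebra satisfies (2) by definition, this branch is complete.

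In the non-abelian case I would split according to whether $\mathbb{A}$ admits a semilattice term (type $5$) or not (types $3$ and $4$). In the type-$5$ case the plan is to globalize the semilattice polynomial living on a trace, using idempotency and simplicity, to a binary term $s$ that is commutative and idempotent and that concentrates a distinguished value; this value is an absorbing element $0$, and its uniqueness is forced by commutativity of $s$. This is alternative (1), and it excludes (3) exactly as in the two-element semilattice, where $\mathbb{A}^2$ carries a non-product congruence and so $\operatorname{Con}(\mathbb{A}^2)\neq\mathbf 2^2$.

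The crux is the remaining implication: non-abelian with no semilattice term (types $3$, $4$) must yield alternative (3), congruence skew-freeness. Here I would show that every congruence $\theta$ of $\mathbb{A}^n$ is a product of coordinate congruences, each of which is $0_\mathbb{A}$ or $1_\mathbb{A}$ by simplicity, giving $\operatorname{Con}(\mathbb{A}^n)\cong\mathbf 2^n$. By induction it suffices to control subdirect representations; passing to the subdirect product cut out by $\theta$ and its linkedness congruences from Proposition \ref{absorb}, non-abelianness forces the linking to be either trivial, in which case the relation is the graph of an isomorphism and $\theta$ is a product congruence, or full, in which case Theorem \ref{AbsThm} gives either the full product $\mathbb{A}\times\mathbb{A}$ (again a product congruence) or a proper absorbing subalgebra of a factor. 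The hard part will be ruling out this last branch: I expect to need the commutator-theoretic rectangulation of powers of a non-abelian simple algebra to argue that any proper absorption would manufacture either an abelian congruence or a semilattice operation, contradicting the standing hypotheses. Simultaneously controlling all skew congruences of $\mathbb{A}^n$ in this way is where the full strength of the tame-congruence and commutator machinery is indispensable.
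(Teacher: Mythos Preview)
The paper does not prove this theorem: it is stated with attribution to Kearnes and used as a black box, with no \texttt{proof} environment following it. There is therefore no ``paper's own proof'' to compare your proposal against.

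On the merits of your sketch: organizing the trichotomy by tame congruence type is a reasonable strategy and close in spirit to how such classifications are obtained, but what you have written is a plan with real gaps rather than a proof. Two points stand out. First, in the types $3$/$4$ branch you invoke Theorem~\ref{AbsThm}, the Barto--Kozik Absorption Theorem, which is stated for \emph{Taylor} algebras; Kearnes' theorem is for arbitrary idempotent simple algebras, and you have not argued that a non-abelian simple idempotent algebra with no semilattice term generates a Taylor variety (a subalgebra could still produce a type-$1$ quotient). Using Theorem~\ref{AbsThm} here is at best anachronistic and at worst circular, since Kearnes' result predates and is independent of the absorption machinery. Second, both the type-$5$ branch (``globalize the semilattice polynomial living on a trace'') and the types-$3$/$4$ branch (``I expect to need the commutator-theoretic rectangulation'') are stated as intentions, and you yourself flag the latter as ``the hard part.'' These are exactly the places where the real work lies, so the proposal does not yet constitute a proof.
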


In fact, more can be said of $\mathbb{A}$, if $\mathbb{A}$ is Abelian. The following theorem provides a much tighter structural characterization in that case:

\begin{theorem} (M. Valeriote, \cite{Valeriote}) Every simple Abelian  algebra is strictly simple, i.e. it contains no proper nontrivial subalgebras.
\end{theorem}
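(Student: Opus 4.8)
The plan is to determine the structure of a finite simple Abelian algebra $\mathbb{A}$ precisely enough to read off its subalgebras. Abelianness is the term condition $t(a,\bar c)=t(a,\bar d)\Rightarrow t(b,\bar c)=t(b,\bar d)$, and the first thing I would do is split into the two structural regimes that commutator theory (equivalently, tame congruence theory) allows for a finite Abelian algebra: the \emph{strongly Abelian} (``type $1$'', essentially unary) case and the \emph{affine} (``type $2$'') case. Concretely, a finite simple Abelian algebra either satisfies the strong term condition, or the congruence $\Delta$ on $\mathbb{A}^2$ governed by the Abelian property behaves in the way characteristic of an affine algebra. Reducing the theorem to these two cases is routine and is where I would start.

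The affine case is the one I expect to carry the real positive content, and it is clean. Here $\mathbb{A}$ is term-equivalent to the idempotent reduct of a module ${}_R\mathbb{M}$, so it possesses the Mal'cev term $d(x,y,z)=x-y+z$; note this term is itself idempotent and hence available in $\mathbb{A}$. With a Mal'cev term present the algebra is congruence permutable and its congruences are exactly the cosets-of-submodules congruences: writing $R'$ for the subring of $R$ generated by the linear parts of the basic operations of $\mathbb{A}$, the congruence lattice of $\mathbb{A}$ is isomorphic to the lattice of $R'$-submodules of $\mathbb{M}$. Simplicity of $\mathbb{A}$ therefore says that $\mathbb{M}$ is a simple $R'$-module, i.e. its only submodules are $0$ and $\mathbb{M}$. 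Finally, a subuniverse $B\subseteq A$ containing a fixed $b_0$ translates to the set $B-b_0$, which is an $R'$-submodule of $\mathbb{M}$; by simplicity this submodule is $0$ or all of $\mathbb{M}$, so $B$ is the singleton $\{b_0\}$ or all of $A$. Hence $\mathbb{A}$ has no proper nontrivial subalgebra in this case.

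The hard part will be the strongly Abelian (type $1$) case, and this is where I would expect to spend essentially all of the effort. Strongly Abelian finite algebras are not affine; their term operations have a rigid ``rectangular'' factorization, and up to term equivalence they are built as matrix powers of $G$-set-like structures. The task is to show that idempotence, together with simplicity, forces such an $\mathbb{A}$ to be strictly simple --- for instance by arguing that a proper intermediate subalgebra would yield, via the strong term condition, a nontrivial proper congruence and so contradict simplicity, or that the only idempotent strongly Abelian simple algebras are the essentially trivial ones (e.g. two-element projection algebras), which are strictly simple by inspection. I anticipate two subtleties: first, pinning down the exact ring $R'$ in the affine case and confirming that simplicity of $\mathbb{A}$ really transfers to simplicity of the $R'$-module rather than merely of the full $R$-module; and second, extracting enough rigidity from the strong term condition in the idempotent setting to exclude an intermediate subalgebra. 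The latter is the genuine obstacle, and it is what makes the statement a theorem requiring the full machinery of tame congruence theory rather than a short computation.
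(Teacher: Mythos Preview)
The paper does not prove this theorem at all: it is quoted as an external result of Valeriote and used as a black box, so there is no in-paper argument to compare your proposal against. What the paper actually needs is only the idempotent Taylor case, and immediately after stating Valeriote's theorem it invokes Szendrei's explicit description of strictly simple idempotent Abelian algebras as affine reducts of vector spaces; in that setting your affine argument is exactly the right one and suffices on its own, since Taylor excludes type~$1$.

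Two remarks on your outline as a proof of the theorem as stated. First, you are tacitly assuming idempotence throughout (you need it to get the Mal'cev term $x-y+z$ as a \emph{term} rather than merely a polynomial, and you invoke it again in the type~$1$ case), but Valeriote's theorem does not assume idempotence; without it the affine case gives only polynomial equivalence to a simple module, and subalgebras are not cosets. So you are proving a weaker statement than the one cited, though one adequate for the paper. Second, your dichotomy ``strongly Abelian versus affine'' is not quite the correct split for an arbitrary finite simple Abelian algebra: the TCT type of $(0_\mathbb{A},1_\mathbb{A})$ is $1$ or $2$, but type~$1$ does not automatically make the whole algebra strongly Abelian. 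In the idempotent case this gap closes easily (an idempotent type~$1$ simple algebra has only projections as term operations, hence is a two-element set), which is why the case you flagged as ``the genuine obstacle'' is in fact short under your standing hypothesis. The genuinely hard content of Valeriote's result lies in the non-idempotent type~$1$ case, which your outline does not reach.
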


In fact, there is a very precise characterization of strictly simple idempotent Abelian algebras (see e.g. \cite{szendrei}):

A finite idempotent Abelian algebra $\mathbb{A}$ is strictly simple if and only if there exist a finite field $K$ and a finite-dimensional vector space $V$ over $K$ such that $\mathbb{A}$ is term equivalent to the algebra
$$(V; x-y+z, \{\lambda x+ (1_K-\lambda )y \, \vert  \, \lambda \in K\})$$
where $+$ is the addition of vectors, $1_K$ is the multiplicative identity of the field $K$, and $\lambda x$ is the scalar multiplication by $\lambda \in K$ in $V$. 

The last algebraic fact we will list here is a fact about subdirect products of simple Maltsev algebras. For the proof, see e.g. \cite{Burris1981}

\begin{theorem} \label{maltsev} Let $\mathbb{A}_1,\ldots,\mathbb{A}_n$ be simple algebras in a Maltsev variety. If
$$\mathbb{B}\leq_{sp} \mathbb{A}_1\times\ldots\times\mathbb{A}_n$$
is a subdirect product, then
$$\mathbb{B}\cong \mathbb{A}_{i_1}\times\ldots\times\mathbb{A}_{i_k}$$
for some $\{i_1,\ldots,i_k\}\subseteq \{1,\ldots,n\}$. 

In particular, if $\mathbb{A}$ and $\mathbb{B}$ are two Maltsev algebras then any subdirect product
$$\mathbb{C}\leq_{sp} \mathbb{A}\times\mathbb{B}$$ 
is either the direct product or the graph of an isomorphism $f:\mathbb{A}\rightarrow\mathbb{B}$.
\end{theorem}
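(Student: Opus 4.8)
The plan is to exploit the defining feature of a Maltsev variety, namely that it is congruence permutable: for any algebra $\mathbb{D}$ in the variety and any $\theta,\psi\in Con(\mathbb{D})$ one has $\theta\circ\psi=\psi\circ\theta$, so that the lattice join $\theta\vee\psi$ coincides with the relational composition $\theta\circ\psi$. Combined with the standard factor-congruence criterion --- if $\theta\wedge\psi=0_\mathbb{D}$ and $\theta\circ\psi=1_\mathbb{D}$, then $d\mapsto(d/\theta,d/\psi)$ is an isomorphism $\mathbb{D}\cong\mathbb{D}/\theta\times\mathbb{D}/\psi$ --- permutability is exactly what converts statements about joins of projection kernels into genuine direct-product decompositions. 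I would first prove the binary ``in particular'' clause and then bootstrap to arbitrary $n$ by induction.

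For the binary case, write $\pi_1,\pi_2\in Con(\mathbb{C})$ for the kernels of the two coordinate projections of $\mathbb{C}\leq_{sp}\mathbb{A}\times\mathbb{B}$. Since the product is subdirect, $\mathbb{C}/\pi_1\cong\mathbb{A}$ and $\mathbb{C}/\pi_2\cong\mathbb{B}$, and since two tuples agreeing in both coordinates are equal, $\pi_1\wedge\pi_2=0_\mathbb{C}$. Simplicity of $\mathbb{A}$ forces, via the correspondence theorem, that the only congruences of $\mathbb{C}$ above $\pi_1$ are $\pi_1$ and $1_\mathbb{C}$; hence $\pi_1$ is a coatom, and likewise $\pi_2$. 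Now I split into two cases. If $\pi_1=\pi_2$, then $\pi_1=\pi_1\wedge\pi_2=0_\mathbb{C}$, so the first projection is a bijection and $\mathbb{C}$ is the graph of an isomorphism $\mathbb{A}\to\mathbb{B}$. If $\pi_1\neq\pi_2$, two distinct coatoms have join $\pi_1\vee\pi_2=1_\mathbb{C}$; permutability gives $\pi_1\circ\pi_2=1_\mathbb{C}$, and the factor-congruence criterion yields $\mathbb{C}\cong\mathbb{A}\times\mathbb{B}$.

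For general $n$ I would induct on $n$, the case $n=1$ being trivial. Given $\mathbb{B}\leq_{sp}\mathbb{A}_1\times\cdots\times\mathbb{A}_n$, let $\eta$ be the kernel of the projection onto the first $n-1$ coordinates and $\eta_n$ the kernel of the projection onto the last, so that $\eta\wedge\eta_n=0_\mathbb{B}$, $\mathbb{B}/\eta_n\cong\mathbb{A}_n$, and $\mathbb{B}/\eta\cong\mathbb{B}'$ where $\mathbb{B}'=proj_{\{1,\dots,n-1\}}(\mathbb{B})$ is itself a subdirect product to which the induction hypothesis applies. As $\mathbb{A}_n$ is simple, $\eta_n$ is a coatom, so $\eta\vee\eta_n$ is either $\eta_n$ or $1_\mathbb{B}$. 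In the first case $\eta\leq\eta_n$, which with $\eta\wedge\eta_n=0_\mathbb{B}$ forces $\eta=0_\mathbb{B}$; thus the projection onto the first $n-1$ coordinates is injective, $\mathbb{B}\cong\mathbb{B}'$, and we are done by induction. In the second case permutability again gives $\mathbb{B}\cong\mathbb{B}'\times\mathbb{A}_n$, and the induction hypothesis decomposes $\mathbb{B}'$ as a product of simple factors $\mathbb{A}_{i_1}\times\cdots\times\mathbb{A}_{i_{k-1}}$.

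The routine parts are the correspondence-theorem bookkeeping and the factor-congruence isomorphism, both of which I would simply cite from \cite{Burris1981}. The only point demanding genuine care --- and the step I expect to be the crux --- is the transition from the lattice-theoretic identity $\eta\vee\eta_n=1_\mathbb{B}$ to the set-theoretic equality $\eta\circ\eta_n=1_\mathbb{B}$: this is precisely where the Maltsev hypothesis is indispensable, since without permutability the join can strictly exceed the composition and the tuple witnessing surjectivity of $d\mapsto(d/\eta,d/\eta_n)$ need not exist.
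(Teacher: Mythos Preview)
The paper does not supply its own proof of this theorem; it simply states the result and refers the reader to \cite{Burris1981}. Your argument is correct and is precisely the standard textbook proof one finds there (it is essentially Fleischer's lemma specialized to simple factors): the projection kernels $\pi_i$ are coatoms by simplicity and the correspondence theorem, and congruence permutability upgrades the lattice join $\pi_i\vee\pi_j=1$ to the relational identity $\pi_i\circ\pi_j=1$, which is exactly the factor-congruence condition. The inductive split on whether $\eta\leq\eta_n$ or $\eta\vee\eta_n=1_\mathbb{B}$ is the clean way to organize it.

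One small remark worth recording: the ``in particular'' clause, as printed in the paper, drops the simplicity hypothesis on $\mathbb{A}$ and $\mathbb{B}$. You tacitly (and correctly) reinstate it when you invoke that $\pi_1,\pi_2$ are coatoms. Without simplicity the dichotomy is false already for abelian groups: $\{(a,b)\in\mathbb{Z}_4\times\mathbb{Z}_4:a\equiv b\pmod 2\}$ is a subdirect product that is neither the full product nor the graph of an isomorphism. So your proof matches the general statement with $n=2$, which is presumably what the paper intends.
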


\section{Datalog, Linear Arc Consistency, and Singleton Linear Arc Consistency}

A \emph{Datalog program} for a relational template $\mathbf{A}$ is a finite set of rules of the form
$$T_0\leftarrow T_1,T_2,\ldots, T_n$$
where $T_i$'s are atomic formulas. 
$T_0$  is the \emph{head} of the rule, while $T_1,T_2,\ldots, T_n$ form the \emph{body} of the rule.
Each Datalog program consists of two kinds of relational predicates:
the \emph{intentional} ones (IDBs), which are those occurring at least once in the head of some rule and which are not part of the original signature of the template (they are derived by the computation.)
The remaining predicates are said to be the \emph{extensional} ones, or EDBs. They are relations from the signature of the template and do not change during computation; i.e. they cannot appear in the head of any rule.
In addition to those, there is one special, designated IDB, which is nullary (Boolean) and referred to as the \emph{goal} of the program.

We say that the rule
$$T_0\leftarrow T_1,T_2,\ldots, T_n$$
is \emph{linear} if at most one atomic formula in its body is an IDB. A Datalog program is linear if so are all its rules. 

The semantics of Datalog programs are generally defined
in terms of fixed-point operators. We are particularly interested in the Datalog programs which, being presented a relational template $\mathbf{A}$, verify if the template satisfies certain consistency requirements in terms of witnessing path patterns prescribed by the CSP instance in question. 

\subsection{Linear Arc Consistency}

Given a CSP instance $\mathcal{I}$ over a relational template $\mathbf{A}$, a Datalog program verifying its linear arc consistency has one IDB $B(x)$, for each subset $B\subseteq A$ in the instance. To construct rules for the program, we consider a single constraint $R(x_{i_1},x_{i_2},\ldots, x_{i_m})$, with $R$ being a $k$-ary relation in the signature of $\mathbf{A}$, and two variables $x_{i_j},x_{i_k}$ in its scope. If a fact $B(x_{i_j})$ has already been established about $x_{i_j}$, we add the rule
$$C(x_{i_k}) \leftarrow R(x_{i_1},x_{i_2},\ldots, x_{i_m}), B(x_{i_j}).$$
The collection of all such rules, along with the goal, is said to be a Datalog program verifying the \emph{linear arc consistency} of the instance. If the goal predicate is derived, the instance is not linearly arc consistent; otherwise, we say that it  has linear arc consistency, or LAC, for short.

The complexity of verifying LAC for an instance is in nondeterministic log-space, since it reduces to verifying reachibility in a directed graph.

\subsection{Singleton Linear Arc Consistency}\label{SLAC}

Singleton linear arc consistency (or, SLAC, for short) is a consistency notion provably stronger than linear arc consistency. A recent result of M. Kozik (\cite{Kozik2016}) proves that, in fact, all CSPs over the templates of bounded width can be solved by SLAC, whereas, under the assumption that \textsc{NL}$\neq$ \textsc{P}, there are CSPs over the bounded width templates which cannot be solved by LAC, for instance \textsc{3-HORN-SAT}, the satisfiability of Horn formulas in the 3-CNF.

We describe the algorithm for verifying SLAC in its procedural form. Given an instance $\mathcal{I}$, we introduce a unary constraint $B_x$, for each variable in the instance and update them by running the LAC algorithm with the value of $x$ being fixed to an arbitrary $a\in B_x$.

\begin{algorithm}[H]
   \caption{SLAC Algorithm}
    \begin{algorithmic}[1]
     
        \For{ every variable $x$ of $\mathcal{I}$}
            \State Introduce the unary constraint $B_x :=(x,A)$
        \EndFor
        \Repeat
                \For{ every variable $x$ of $\mathcal{I}$}  
                    \State $C:=A$
                    \For{ every value $a\in A$}
                          \State run LAC on the restriction of $\mathcal{I}$ with $B_x=\{a\}$ and constraints modified accordingly
                          \If { LAC results in contradiction}
                                \State remove $a$ from $C$
                         \EndIf
                    \EndFor
                    \State $B_x:=(x,C)$
                 \EndFor
        \Until{ There are no further changes in $B_x$}
\end{algorithmic}
\end{algorithm}

In this paper, we will be using the multisorted version of SLAC. What we mean by that, is that the predicates for the domains of different variables $x$ are assumed to be the subsets of different sorted domains, generated by the reduction to a binary instance. Since the domains produced by the reduction to the binary case are positive-primitive definable, this presents no particular issue.

\section{Patterns and steps}

We will create SLAC instances of structures with binary constraints, and, to that end, we define the notions of a pattern and a step. Our definitions will be special cases of the more general ones given in \cite{Kozik2016}. We fix an instance $\mathcal{I}$ of a CSP, all of whose constraint relations are binary.

\begin{definition} A \emph{step}  in an instance $\mathcal{I}$ is a pair of variables which is the scope of a constraint in $\mathcal{I}$. A \emph{path-pattern} from $x$ to $y$ in $\mathcal{I}$ is a sequence of steps such that every two steps correspond to distinct binary constraints and which identifies each step's end variable with the next step's start variable. A \emph{subpattern} of a path-pattern is a path-pattern defined by a substring of the sequence of steps. We say that a path-pattern is a \emph{cycle} based  at $x$ if both its start and end variable are $x$.
\end{definition}

\begin{definition} Let
$$p=(x_1,x_2,\ldots,x_k)$$
be a path-pattern. A \emph{realization} of $p$ is a $k$-tuple $(a_1,\ldots,a_k)\in \mathbb{S}_{x_1}\times\ldots\times \mathbb{S}_{x_k}$ such that $(a_i,a_j)$ satisfies the binary constraint associated with the $(x_i,x_j)$-step. If $p$ is a path-pattern with the start variable $x_i$ and $A\subseteq S_{x_i}$, we denote $A+p$ the set of the end elements of all realizations of $p$ whose first element is in $A$. $-p$ will denote the inverse pattern of $p$, i.e. the pattern obtained by reversing the traversal of the pattern $p$. In that case, we define $A-p = A+(-p)$.
\end{definition}

We also make the following observations:

\begin{enumerate} 
\item The LAC algorithm does not derive a contradiction on the instance $\mathcal{I}$ if and only if every path-pattern in $\mathcal{I}$ has a solution.
\item If an instance $\mathcal{I}$ is a SLAC instance then, for every variable $x$ and every $a\in \mathbb{S}_x$, and every path pattern $p$ which is a cycle based at $x$, there exists a realization of $p$ with $x$ being assigned the value  $a$. 
\end{enumerate}

\section{Reduction to binary relations}\label{Binary}

In this section, we outline the reduction of an arbitrary instance with a sufficient degree of consistency to a binary one. The construction is due to L. Barto and M. Kozik and we largely adhere to their exposition in \cite{b-k1}.

An instance is said to be \emph{syntactically simple} if it satisfies the following conditions:

\begin{itemize}
\item every constraint is binary and it its scope is a pair of distinct variables $(x,y)$.
\item for every pair of distinct variables $x,y$, there is at most one costraint $R_{x,y}$ with the scope $(x,y)$.
\item if $(x,y)$ is the scope of $R_{x,y}$, then $(y,x)$ is the scope of the constraint $R_{y,x}=\{(b,a) \, \vert\, (a,b)\in R_{x,y}\}$ (\emph{symmetry of constraints}).
\end{itemize}

Given the Taylor algebra $\mathbb{A}$ parametrizing the instance $\mathcal{I}$, such that the maximal arity of a relation in $\mathcal{I}$ is $K$, we run the algorithm verifying the $(2\lceil\frac{K}{2}\rceil,3\lceil\frac{K}{2}\rceil)$-consistency on $\mathcal{I}$. If the algorithm terminates in failure, we output ``$\mathcal{I}$ has no solution." If the algorithm terminates successfully, we output a new, syntactically simple instance $\mathcal{I}'$ in the following way:

\begin{itemize} 
\item The instance is parametrized by $\mathbb{A}^{\lceil \frac{K}{2}\rceil}$, which is a Taylor algebra. Since $\mathbb{A}$ generates a Taylor variety, which has a weak near unanimity term and, then, so does the variety generated by $\mathbb{A}^{\lceil\frac{K}{2}\rceil}$.
\item For every $\lceil\frac{K}{2}\rceil$-tuple of variables in $\mathcal{I}$, we introduce a new variable in $\mathcal{I}'$ and, if $x=(x_1,\ldots,x_{\lceil\frac{K}{2}\rceil})$ and $y=(y_1,\ldots,y_{\lceil\frac{K}{2}\rceil})$ with $x\neq y$, we introduce a constraint 
\begin{multline*}
R_{x,y}=\{((a_1,\ldots,a_{\lceil\frac{K}{2}\rceil}),(b_1,\ldots,b_{\lceil\frac{K}{2}\rceil}))\,\vert \\ (a_1,\ldots,a_{\lceil\frac{K}{2}\rceil},b_1,\ldots,b_{\lceil\frac{K}{2}\rceil}) \mbox{ admit a consistent $K$-assignment of values }\}.
\end{multline*}
\end{itemize}

The binary instance $I'$ constructed in this way will have a solution if, and only if, the instance $I$ has a solution. 

\begin{definition} Let $l\geq k>0$ be two integers. We say that a CSP instance $\mathcal{I}$ is $(k,l)$-\emph{minimal} if:

\begin{enumerate}
\item Every tuple of distinct variables of length at most $l$ is the scope of some constraint of $\mathcal{I}$.
\item For every $k$-tuple $\bar{x}$ of distinct variables, and every pair of constraints $C_1$ and $C_2$ of $\mathcal{I}$ whose scopes contain $\bar{x}$ among its variables, the projections of $C_1$ and $C_2$ to the variables $\bar{x}$ coincide.
\end{enumerate}
\end{definition}

\section{Cyclic Constraint Satisfaction Problems}

In this section, we investigate a rather specific type of the constraint satisfaction problem, which will play the crucial role in defining the consistency notion needed in the remainder of the paper. 

A \emph{cyclic} CSP (or, CCSP, for short) is a constraint satisfaction problem which has as its domains isomorphic simple absorption-free Taylor algebras, and all of whose constraints are binary, and which is 1-consistent. From the discussion in Section \ref{absorption}, we know that each constraint relation between two domains $\mathbb{S}_x$ and $\mathbb{S}_y$ is either the graph of an isomorphism or a full direct product $\mathbb{S}_x\times \mathbb{S}_y$. 

The classification of finite simple idempotent algebras, which are absorption-free, suggests that a CCSP may be one of the following two types:

\begin{enumerate}
\item A system of linear equations in two variables over a finite field; 
\item A binary CSP over a congruence skew-free, absorption-free simple algebra.
\end{enumerate}

For each CCSP $\mathcal{I}$ over a simple absorption-free algebra $\mathbb{A}$, we can define the accompanying undirected \emph{instance graph} $G_{\mathbb{A}}(\mathcal{I})$ in the following way: the vertices of the graph are all domains $\mathbb{S}_x$ of $\mathcal{I}$ and two vertices $\mathbb{S}_x$ and $\mathbb{S}_y$ have an edge between them if, and only if, the binary constraint relation $R_{x,y}$ is the graph of an isomorphism. We can compute the connected components of this graph in logspace, using Reingold's algorithm (\cite{Reingold05}). 

It is not difficult to see that, in order to solve such a CSP, we need to solve it in each connected component of $G_{\mathbb{A}}(\mathcal{I})$. In the case when the domains are isomorphic simple affine modules, this can be accomplished using the familiar Gaussian elimination algorithm. 

The solvability of the CCSP in the case of a simple, congruence skew-free, absorption-free algebra is less obvious. First, one needs to establish the so-called rectangulation property for subdirect products of such algebras.

\begin{proposition} Let $\mathbb{A}_1,\ldots,\mathbb{A}_k$ be isomorphic simple, absorption-free, congruence skew-free algebras lying in a Taylor variety. If $\mathbb{R}\leq_{sp} \prod_i  \mathbb{A}_i$ is such that $\pi_i\vee \pi_j=1_\mathbb{R}$, then $\mathbb{R}=\prod_i \mathbb{A}_i$. In addition, $\mathbb{R}$ is absorption-free.
\end{proposition}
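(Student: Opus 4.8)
\subsection*{Proof sketch}

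The plan is to prove \emph{both} assertions simultaneously by induction on the number $k$ of factors. The hypothesis $\pi_i\vee\pi_j=1_{\mathbb{R}}$ together with subdirectness says precisely that each two-coordinate projection $\operatorname{proj}_{\{i,j\}}(\mathbb{R})\leq_{sp}\mathbb{A}_i\times\mathbb{A}_j$ is linked. Since $\mathbb{A}_i\times\mathbb{A}_j$ lies in a Taylor variety and the factors are absorption-free, Theorem \ref{AbsThm} forces $\operatorname{proj}_{\{i,j\}}(\mathbb{R})=\mathbb{A}_i\times\mathbb{A}_j$ for every pair. This ``full product on all pairs'' is the base case $k=2$ and the standing input for the induction.

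For the inductive step of the main claim, put $\mathbb{R}'=\operatorname{proj}_{\{1,\dots,k-1\}}(\mathbb{R})$. Its two-coordinate projections coincide with those of $\mathbb{R}$, so $\mathbb{R}'$ is again pairwise linked; by the induction hypothesis $\mathbb{R}'=\prod_{i<k}\mathbb{A}_i$ and $\mathbb{R}'$ is absorption-free. I then regard $\mathbb{R}\leq_{sp}\mathbb{R}'\times\mathbb{A}_k$ and peel off the last coordinate via Theorem \ref{AbsThm}, for which I must show $\mathbb{R}$ is linked over $\mathbb{R}'\times\mathbb{A}_k$. Let $\beta$ be the linkedness congruence induced on the simple algebra $\mathbb{A}_k$, so $\beta\in\{0_{\mathbb{A}_k},1_{\mathbb{A}_k}\}$. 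If $\beta=0_{\mathbb{A}_k}$, then $\mathbb{R}$ is the graph of a surjective homomorphism $\phi:\mathbb{R}'\to\mathbb{A}_k$. Here congruence skew-freeness is decisive: since all $\mathbb{A}_i$ are isomorphic and skew-free, $\operatorname{Con}\!\big(\prod_{i<k}\mathbb{A}_i\big)\cong\mathbf{2}^{\,k-1}$, so the maximal congruence $\ker\phi$ (its quotient being simple) must be one of the projection kernels $\pi_j$. Hence $\phi$ factors as projection onto a single coordinate $j$ followed by an isomorphism $\mathbb{A}_j\to\mathbb{A}_k$, which makes $\operatorname{proj}_{\{j,k\}}(\mathbb{R})$ the graph of an isomorphism, contradicting that it is the full product. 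Therefore $\beta=1_{\mathbb{A}_k}$, $\mathbb{R}$ is linked, and as $\mathbb{R}'$ and $\mathbb{A}_k$ are absorption-free, Theorem \ref{AbsThm} gives $\mathbb{R}=\mathbb{R}'\times\mathbb{A}_k=\prod_i\mathbb{A}_i$.

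For the absorption-free assertion at level $k$, suppose $\mathbb{D}\unlhd\prod_i\mathbb{A}_i$ is proper. Since absorption is preserved by coordinate projections (a witnessing term acts coordinatewise), $\operatorname{proj}_i(\mathbb{D})\unlhd\mathbb{A}_i$, and absorption-freeness of $\mathbb{A}_i$ gives $\operatorname{proj}_i(\mathbb{D})=\mathbb{A}_i$; thus $\mathbb{D}\leq_{sp}\prod_i\mathbb{A}_i$. For each pair, $\operatorname{proj}_{\{i,j\}}(\mathbb{D})\unlhd\mathbb{A}_i\times\mathbb{A}_j$ is subdirect, so by Proposition \ref{absorb}(2) it shares the linkedness congruences of the full product $\mathbb{A}_i\times\mathbb{A}_j$, which are the top congruences; hence $\mathbb{D}$ is pairwise linked. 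The main assertion, already proved at level $k$, forces $\mathbb{D}=\prod_i\mathbb{A}_i$, contradicting properness. I arrange the joint induction so that at each level the main claim is established first (from both claims at level $k-1$) and the absorption-free claim second (from the main claim at level $k$), which avoids any circularity.

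The main obstacle is the linkedness step, i.e.\ ruling out $\beta=0_{\mathbb{A}_k}$. This is exactly where congruence skew-freeness cannot be dispensed with: in the Abelian/affine case of Kearnes' trichotomy, $\prod_{i<k}\mathbb{A}_i$ has skew congruences and genuine graph-of-homomorphism subdirect products exist, so the conclusion would fail. The argument consumes both the pairwise-linked hypothesis (to contradict a single-coordinate factorization) and the structural input that we are in the skew-free, rather than the module, regime.
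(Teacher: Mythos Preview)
Your argument is correct and follows the same route as the paper: joint induction on $k$, splitting off the last factor, and using congruence skew-freeness to identify the linkedness congruence $\alpha$ on $\prod_{i<k}\mathbb{A}_i$ with a single projection kernel $\pi_{i_0}$, contradicting $\pi_{i_0}\vee\pi_k=1_\mathbb{R}$. The only cosmetic differences are that you make the application of the inductive hypothesis to $\mathbb{R}'=\operatorname{proj}_{\{1,\dots,k-1\}}(\mathbb{R})$ explicit (the paper writes $\mathbb{R}\leq_{sp}\mathbb{A}'\times\mathbb{A}_k$ without justifying the first projection), and for the absorption-free clause you invoke the already-proved main claim at level $k$ together with Proposition~\ref{absorb}(2), whereas the paper deduces it in one line from Theorem~\ref{AbsThm} applied to the two-factor product $\mathbb{A}'\times\mathbb{A}_k$.
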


\begin{proof}  We prove both statements simultaneously, by induction on $k$. If $k=2$, the statements follow from Theorem \ref{AbsThm}. Assume $k\geq 3$ and consider $\mathbb{R}$ as a subdirect product of two algebras:
$$\mathbb{R}\leq_{sp} (\mathbb{A}_1\times\ldots\times\mathbb{A}_{k-1})\times\mathbb{A}_k.$$
 Let $\mathbb{A}'$ denote $\mathbb{A}_1\times\ldots\times\mathbb{A}_{k-1}$ and let $\alpha,\beta$ be the linkedness congruences on $\mathbb{A}'$ and $\mathbb{A}_k$, respectively. By inductive hypothesis, $\mathbb{A}'$ is absorption-free which yields two possibilities: either both linkedness congruences $\alpha$ and $\beta$ are full congruences on their respective algebras or $\beta=0_{\mathbb{A}_k}$. If the former is the case, we get the desired conclusion, after another application of Theorem \ref{AbsThm}. We proceed to show that the assumption that $\beta=0_{\mathbb{A}_k}$ leads to a contradiction. Since $\mathbb{A}'/\alpha$ is a simple algebra and all factors are isomorphic and congruence skew-free, 
$$\alpha = \theta_1\times\ldots\times \theta_{k-1},$$
where $\theta_i\in Con(\mathbb{A}_i)$, for $i=1,\ldots,k-1$ and, for precisely one $i$, say $i_0$, 
$$\theta_{i_0}=0_{\mathbb{A}_{i_0}},$$
while, if $1\leq j\leq k-1$ and $j\neq i_0$, $\theta_j=1_{\mathbb{A}_j}$.

However, this violates the assumption that $\pi_{i_0}\vee \pi_k=1_\mathbb{R}$. Therefore,
$$\mathbb{R}=\prod_i \mathbb{A}_i.$$ 

Finally, since both $\mathbb{A}'$ and $\mathbb{A}_k$ are absorption-free and fully linked, their direct product is absorption-free as well.

\end{proof}

As in \cite{Kozik2016}, upon establishing the rectangulation in simple absorption-free congruence skew-free algebras, one can emulate the proof given in that paper to show that, in that case, CCSP will be solvable if, and only if, it is SLAC.

Looking ahead, the algorithm we will construct will be based on pre-processing the instance by enforcing existence of solutions on instances induced by simple absorption-free algebras in $HS(\mathbb{S}_x)$, for all $x\in V$. 

\section{A proof of Main Theorem }

In this section we provide a proof of the Dichotomy Conjecture using the binary instance constructed in Section \ref{Binary}. Therefore, from this point on, we assume that we are working with an instance $\mathcal{I}$ of a CSP, parametrized by a Taylor algebra, which is syntactically simple and binary and (2,3)-minimal.

\subsection{AF-consistency}

The fundamental obstacle in any attempt to directly adapt known algorithms for solving CSPs parametrized by bounded width algebras to the general case of Taylor templates lies in the apparent difficulty to distinguish the computation paths leading to solutions from those leading to failure, based on mere global satisfaction of a local consistency notion in the instance.

In this subsection, we develop the notion of  AF-consistency which can be enforced on a (2,3)-consistent syntactically simple binary instance. 

We define inductively, the AF-consistency checking agorithm $\mathcal{A}_k$, for all CSP instances $\mathcal{I}$ such that $\max_{x}|\mathbb{S}_{x}|\leq k$.

Let $B\leq \mathbb{S}_x$, for $x\in V$. For any $y\in V$, $y\neq x$, we define $R^+_{x,y}(B)=\{c\in \mathbb{S}_y \, \vert \, \exists b\in B, \, (b,c)\in E_{x,y}\}$. It is readily seen that $R^+_{x,y}(B)$ is a subuniverse of $\mathbb{S}_y$.

Next, we define a list of all pairs $(M,\theta_M)$, where $M$ is an absorption-free subuniverse of some $\mathbb{S}_x$, and $\theta_M$ is its maximal congruence.
$$\mathcal{M}= ((M,\theta_M)_i \, : i\in I),$$
so that, if $(M,\theta_M)$ and $(M',\theta_{M'})$ are two elements of the list and $M'$ is contained in a $\theta_M$-block of $M$, then $(M,\theta_M)$ appears in the list $\mathcal{M}$ before $(M',\theta_M')$. The reason for this is the following: if $(M, \theta_M)$ fails the test, there will be no need to examine any subinstances determined by a subuniverse of $M$, so, by removing $M$, we are also removing all of its subuniverses.

We are now ready to state the procedure which enforces AF-consistency

\begin{enumerate}
\item For the next pair $(M,\theta_M)$ in the list $\mathcal{M}$, form the $(M,\theta_M)$-\emph{test instance} in the following way: suppose $M\leq \mathbb{S}_x$, for some $x$. For $y\neq x$, if there exists a congruence $\theta_y$ on $R^+_{x,y}(M)$, such that, if $B_1$ and $B_2$ are two distinct $\theta_M$-blocks and $p$ a path pattern from $x$ to $y$ such that $R^+_{x,y}(M)\cap (B_1+p)$ and $R^+_{x,y}(M)\cap (B_2+p)$ are containt in distinct blocks of $\theta_y$, we will say that the variable $y$ is  \emph{relevant}. Therefore, for each relevant variable $y$,
$$\mathbb{S}/\theta_{y}\cong M/\theta_M.$$
In fact, $\theta_y$ is independent of the choice of the path pattern $p$, because of (2,3)-consistency. 

We define a \emph{strand} to be the set of those congruence blocks in each relevant domain which are linked to the same congruence block of $\theta_M$. The $(M,\theta_M)$-test instance will have as its domains the algebras $R^+_{x,y}(M)/\theta_y$, for $y\neq x$, for all relevant variables $y$.  Since $\mathcal{I}$ is a (2,3)-consistent instance, for any pair of relevant variables $y,z$, distinct from $x$, the binary constraint $E_{y,z}$ induces a subdirect product on $R_{x,y}^+(M)$ and $R_{x,z}^+(M)$, so that the $(M,\theta_M)$-test instance is 1-consistent.

\item The $(M,\theta_M)$-test instance is a CCSP, and using either Gaussian elimination or SLAC, we test whether blocks of $\theta_M$ appear in solutions or not; those which do not are removed. For every solution strand, we test AF-consistency, using $\mathcal{A}_{k-1}$.

\item Enforce (2,3)-consistency.

\item Proceed to the next element in the list $(M',\theta_M')$ which is present in the instance, if there are any left.

\end{enumerate}

There are only polynomially many pairs in the list $\mathcal{M}$ (in fact, $\mathcal{O}(n)$), so the algorithm for enforcing AF-consistency runs in polynomial time. The subinstances which pass the Step 3 of the AF-consistency algorithm will form a list $\mathcal{P}$ and on each subinstance from $\mathcal{P}$, we enforce (2,3)-minimaility independently. The subinstance from Step 3, corresponding to a $\theta_A$ block $B$, will be referred to as the passive subinstance determined by $B$.

\begin{lemma} Let $\mathcal{I}$ be a syntactically simple binary instance and let $\mathcal{I'}$ be the instance produced by applying the AF-consistency algorithm to it. Then, the sets of solutions to $\mathcal{I}$ and $\mathcal{I'}$ coincide.
\end{lemma}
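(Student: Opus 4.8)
The statement to prove is a soundness/completeness claim: the AF-consistency algorithm, when applied to a syntactically simple binary instance $\mathcal{I}$, produces an instance $\mathcal{I}'$ with exactly the same solution set. Since $\mathcal{I}'$ is obtained from $\mathcal{I}$ by a sequence of operations — forming test instances, removing congruence blocks that fail to appear in solutions, recursive calls to $\mathcal{A}_{k-1}$, and re-enforcing $(2,3)$-consistency — the natural strategy is to verify that each individual operation preserves the solution set, and then to conclude by induction on the number of steps (equivalently, on the length of the list $\mathcal{M}$ together with the recursion depth $k$).

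Let me sketch the steps in order. First I would establish the base case of the recursion: when $\max_x |\mathbb{S}_x| \leq 1$, the algorithm $\mathcal{A}_1$ does nothing of substance, so the claim is trivial. For the inductive step, I would fix a single pair $(M,\theta_M)$ in the list $\mathcal{M}$ and analyze the effect of processing it. The key observation is that the $(M,\theta_M)$-test instance is constructed from genuine path-patterns and projections of the constraint relations $E_{x,y}$; consequently, any solution $f$ of $\mathcal{I}$ with $f(x)$ lying in a $\theta_M$-block $B$ induces, by restriction along the relevant variables and passage to the quotients $R^+_{x,y}(M)/\theta_y$, a solution of the test instance in which $B$ appears. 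This is the crucial inclusion: if a block $B$ is removed in Step 2 because it appears in no solution of the test CCSP, then no solution of $\mathcal{I}$ could have assigned $x$ a value in $B$, so deleting $B$ removes no actual solution. Here I would lean on Proposition \ref{absorb}(2), which guarantees that the linkedness congruences $\theta_y$ are independent of the absorbing subalgebra chosen, and on $(2,3)$-consistency, which is what makes $\theta_y$ independent of the path-pattern $p$ and makes the test instance $1$-consistent — exactly the hypotheses required to treat it as a CCSP solvable by Gaussian elimination or SLAC.

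Next I would handle the recursive refinement and the re-enforcement of consistency. For the recursive call to $\mathcal{A}_{k-1}$: each passive subinstance determined by a surviving block $B$ is a CSP whose domains are proper subalgebras (the strands inside the $\theta_M$-blocks), so $\max_x|\mathbb{S}_x| < k$ there, and the inductive hypothesis applies to show that $\mathcal{A}_{k-1}$ preserves the solution set of that subinstance; summing over the disjoint cases $f(x) \in B$ shows the solution set of $\mathcal{I}$ is untouched. For Step 3, re-enforcing $(2,3)$-consistency is a standard constraint-propagation operation that only removes tuples participating in no solution, hence is solution-preserving by the usual argument. Composing these solution-preserving reductions over all pairs in $\mathcal{M}$ gives $\mathrm{Sol}(\mathcal{I}) = \mathrm{Sol}(\mathcal{I}')$.

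The main obstacle, and the part deserving the most care, is the completeness direction of Step 2: proving that \emph{every} solution of $\mathcal{I}$ projects to a solution of the $(M,\theta_M)$-test instance, so that only genuinely solution-free blocks are discarded. The subtlety is that the test instance lives in the quotients $R^+_{x,y}(M)/\theta_y$ and its constraints are built from path-patterns rather than directly from the original binary constraints; one must check that the projection of a global solution respects the congruences $\theta_y$ uniformly across all relevant variables and all path-patterns joining them. This is precisely where the well-definedness of $\theta_y$ (independence from the path-pattern $p$, guaranteed by $(2,3)$-consistency) is indispensable — without it the quotient structure would not be coherent and the projected assignment might fail to be a solution of the test instance. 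I would therefore isolate this coherence claim as the technical heart of the argument and verify it before assembling the inductive composition.
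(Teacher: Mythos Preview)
Your proposal is correct and follows the same approach as the paper: show that any solution $f$ of $\mathcal{I}$ with $f(x)\in M$ restricts (via the relevant variables and the quotients by $\theta_y$) to a solution of the $(M,\theta_M)$-test instance, so that only blocks containing no solution are discarded, and that such a solution lies entirely in the passive subinstance determined by its $\theta_M$-block, so the $(2,3)$-consistency step removes nothing. The paper's proof is considerably terser --- it states exactly these two observations in two sentences and stops --- whereas you are more careful in making the induction on $k$ explicit to handle the recursive call to $\mathcal{A}_{k-1}$ and in isolating the well-definedness of the $\theta_y$ as the technical point; this is a genuine elaboration rather than a different route.
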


\begin{proof} If there exists a solution $f$ to $\mathcal{I}$ whose projection to the $x$-coordinate is in $M\leq \mathbb{S}_x$, then, its restriction to relevant variables is also a solution of the $(M,\theta_M)$-test instance, viewed as a subinstance of $\mathcal{I}$. If AF-consistency test fails on a $\theta_M$-block, then there cannot be any solutions $f$ projecting into that block in their $x$-coordinate.

Also, the solution projecting into a $\theta_M$-block $B$ in its x-coordinate will lie in its entirety in the subinstance induced by $B$, so this subinstance must not fail the (2,3)-consistency test either.
\end{proof}

In order to clarify the reasons behind introducing the notion of AF-consistency, we remark that, in essence, pre-processing the instance in the described way will effectively remove the branches of the computation tree which lead to failure (i.e. which yield no solutions.) Another way to view this stage of the algorithm has universal algebraic provenance: the obstructions to bounded width are strictly simple algebras of affine type which are in $\textsc{HS}(\mathbb{A})$ (\cite{LVZ}). This pre-processing examines such algebras, among other absorption-free subuniverses, and trivializes them to a single element or removes them altogether in the case when not all strands meet consistency requirements. Otherwise, a strand meeting such a requirement can be chosen arbitrarily, just as in the bounded width algorithms in the literature.

The subinstances determined by the surviving $\theta_M$-blocks, for absorption-free algebras $M$, can be viewed as a collection of polynomially many ``passive'' subinstances of the problem. Implicitly, any reduction via absorption, or otherwise, may be seen as a reduction performed on the passive subinstances. At the point where the transformation in question reduces the problem to one of these subinstances, it becomes active while the subinstances which are not contained in it are discarded by the algorithm. 

\subsection{Reduction to smaller subinstances - outline}

The general idea of the algorithm we are about to present can be described as follows: assuming the variables $V$ of $\mathcal{I}$ have been linearly ordered in some fashion, say $V=\{x_i \; : 1\leq i\leq N\}$, we reduce the domains $\mathbb{S}_{x_i}$ to singletons, so that $\mathbb{S}_{x_i}$ is reduced to a single element before $\mathbb{S}_{x_j}$ is, for $i<j$.

During the reduction of $\mathbb{S}_{x_i}$ to a single element, reductions based on the presence of absorbing subuniverses in $\mathbb{S}_{x_i}$ are used, until $\mathbb{S}_{x_i}$ becomes absorption-free. This is followed by a reduction to a passive subinstance in the list $\mathcal{P}$. These two types of reductions are alternated until, eventually, $\mathbb{S}_{x_i}$ becomes a singleton. During this sequence of reductions, the list $\mathcal{P}$ of passive subinstances is updated and, because of the enforced AF-consistency, never becomes empty.

In what follows, we assume that $\mathcal{I}$ is a 1-consistent SLAC instance which is also AF-consistent, with the accompanying list $\mathcal{P}$ of passive subinstances, such that, for every absorption-free $A\leq \mathbb{S}_{x_j}$, $1\leq j\leq n$, and every maximal congruence $\theta$ of $A$, there is a passive 1-consistent SLAC subinstance generated by every $\theta$-block of $A$.

\subsubsection{Absorption is present in $\mathbb{S}_{x_i}$}

In this case,  some $\mathbb{S}_{x_i}$ contains a proper absorbing subuniverse $B$. The reduction via absorption from Kozik's paper adapts to this setting and can be applied to the instance $\mathcal{I}$. This particular choice of $B$ implicitly defines reductions on all passive subinstances from $\mathcal{P}$. If the reduced subinstance $\mathcal{I}'$ fails to intersect a passive subinstance from $\mathcal{P}$, that passive subinstance is removed from $\mathcal{P}$. For the instance $\mathcal{I}$, if $\mathbb{S}_x$ is not absorption-free, the analysis of the proof in Section 10 of \cite{Kozik2016} indicates that $B\unlhd \mathbb{S}_x$ can always be chosen in such a way that $B$ is a minimal absorbing subuniverse of $\mathbb{S}_x$ and which is, therefore, absorption-free.

It is easily seen that, for any such choice of $B$, $\mathcal{P}$ cannot become empty: namely, by considering any maximal congruence $\psi$ on $B$, we see that the passive subinstances determined by the blocks of $B/\psi$ must remain in $\mathcal{P}$.

The following fact has an obvious proof, based on the definition of an absorbing subuniverse:

\begin{lemma} Let $B\leq A$ and $C\unlhd A$. Then, if $C\cap B\neq\emptyset$, $C\cap B\unlhd B$.
\end{lemma}

\begin{proposition} \label{useful} Let $\mathcal{I}$ be a 1-consistent syntactically simple binary instance with domains $\mathbb{S}_x$, $x\in V$. If $A\unlhd \mathbb{S}_x$, then $R^+_{x,y}(A)\unlhd \mathbb{S}_y$, for all $y\neq x$.
\end{proposition}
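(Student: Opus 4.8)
The plan is to show that a single term witnesses both absorptions: if $t$ is an $n$-ary term with $t(A,\dots,A,\mathbb{S}_x,A,\dots,A)\subseteq A$ for every placement of the $\mathbb{S}_x$-slot, then the same $t$ witnesses $R^+_{x,y}(A)\unlhd \mathbb{S}_y$. Write $C=R^+_{x,y}(A)$, which is already known to be a subuniverse of $\mathbb{S}_y$. Since $\mathcal{I}$ is $1$-consistent and syntactically simple, the binary constraint $E_{x,y}$ is a subdirect product $E_{x,y}\leq_{sp}\mathbb{S}_x\times\mathbb{S}_y$; this is the property I will use to lift elements of $\mathbb{S}_y$ back into $\mathbb{S}_x$.

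Fix a slot $j$ and take $c_1,\dots,c_{j-1},c_{j+1},\dots,c_n\in C$ together with an arbitrary $d\in\mathbb{S}_y$ in slot $j$; the goal is $t(c_1,\dots,c_{j-1},d,c_{j+1},\dots,c_n)\in C$. The key step is to choose preimages through $E_{x,y}$: for each $i\neq j$, by definition of $C$ there is $a_i\in A$ with $(a_i,c_i)\in E_{x,y}$, and for the slot $j$ subdirectness ($1$-consistency) supplies some $s\in\mathbb{S}_x$ with $(s,d)\in E_{x,y}$. I then apply $t$ coordinatewise to the tuples $(a_1,c_1),\dots,(s,d),\dots,(a_n,c_n)$.

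Because every term operation is a polymorphism, $E_{x,y}$ is preserved, so the resulting pair $\bigl(t(a_1,\dots,s,\dots,a_n),\,t(c_1,\dots,d,\dots,c_n)\bigr)$ again lies in $E_{x,y}$. Its first coordinate has all arguments in $A$ except the single slot-$j$ entry $s\in\mathbb{S}_x$, so the absorption $A\unlhd\mathbb{S}_x$ (witnessed by $t$) forces $t(a_1,\dots,s,\dots,a_n)\in A$. Hence the second coordinate is an element of $\mathbb{S}_y$ having a preimage in $A$ under $E_{x,y}$, i.e. $t(c_1,\dots,d,\dots,c_n)\in R^+_{x,y}(A)=C$. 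Since the slot $j$ was arbitrary, this establishes $t(C,\dots,C,\mathbb{S}_y,C,\dots,C)\subseteq C$ and therefore $C\unlhd\mathbb{S}_y$.

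The argument is essentially bookkeeping once the right term is fixed; the only point requiring care is the use of $1$-consistency to guarantee a preimage $s$ for the one full-domain argument $d\in\mathbb{S}_y$, since without subdirectness of $E_{x,y}$ the coordinatewise application of $t$ could not even be set up. I would emphasize that this is precisely where the hypothesis that $\mathcal{I}$ is $1$-consistent enters, while the preimages $a_i$ for the remaining slots come for free from the definition of $R^+_{x,y}$.
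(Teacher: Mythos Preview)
Your proof is correct. The paper itself does not give a proof of this proposition; it is stated immediately after a lemma that the authors say ``has an obvious proof, based on the definition of an absorbing subuniverse,'' and the proposition is evidently meant to be read in the same spirit. Your argument is exactly the standard one the authors have in mind: the same absorbing term $t$ works on the image side because $E_{x,y}$ is a subuniverse (so $t$ preserves it) and, by $1$-consistency, $E_{x,y}$ is subdirect (so the single ``bad'' argument $d\in\mathbb{S}_y$ can be lifted to some $s\in\mathbb{S}_x$). There is nothing to add or correct.
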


Using this fact, we see that, if $M\leq \mathbb{S}_x$ is absorption-free, then every absorbing subuniverse of $C\unlhd \mathbb{S}_x$, such that $C\cap M\neq\emptyset$, must satisfy $M\leq C$. In addition, for every $y$, $R^+_{x,y}(M)\leq R^+_{x,y}(C)$. The proof from \cite{Kozik2016} shows that SLAC remains preserved in all passive subinstances under the absorption reduction defined in that paper, unless a domain of the passive subinstance fails to intersect the minimal absorbing subuniverse in $\mathbb{S}_{x}$ in some coordinate $x$.

\subsection{Absorption is absent from the instance}\label{absfree}

Next, we consider the case when all the domains $\mathbb{S}_x$ in the 1-consistent, AF-consistent and SLAC instance $\mathcal{I}$ have no proper absorbing subuniverses. In addition, all passive subinstances in $\mathcal{P}$ are 1-consistent and SLAC.

Let $\theta$ be a maximal congruence of $\mathbb{S}_x$. We recall that when we defined the $(\mathbb{S}_x,\theta)$-test instance, the \emph{relevant} variables were defined in the following way: if $M\leq \mathbb{S}_x$, for some $1\leq i\leq n$. For $y\neq x$, if there exists a congruence $\theta_y$ on $R_{x,y}(M)$, such that, if $B_1$ and $B_2$ are two distinct $\theta_M$-blocks and $p$ a path pattern from $x$ to $y$ such that $R^+_{x,y}\cap (B_1+p)$ and $R^+_{x,y}\cap (B_2+p)$ are containt in distinct blocks of $\theta_y$, we defined the variable $y$ to be  \emph{relevant}.  In other words, the variable $y$ is non-relevant if, and only if, \emph{every} path-induced subdirect product
$$C\leq_{sp} \mathbb{S}_x/\theta\times R^+_{x,y}(\mathbb{S}_x)$$
is linked. In particular, the subdirect product induced by $E_{x,y}$
$$C\leq_{sp} M/\theta_M\times R^+_{x,y}(M)$$
is linked. A result from \cite{b-k1}, in essence, states the following

\begin{proposition} Let $C\leq_{sp} A\times B$ be a subdirect product of finite Taylor algebras $A$ and $B$, such that $C$ is linked. Let $A_1\unlhd A$ and $B_1\unlhd B$, and $C'=C\cap (A_1\times B_1)$ be a subdirect product of $A_1$ and $B_1$. Then, $C'$ is linked.
\end{proposition}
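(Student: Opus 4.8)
The plan is to reduce the statement to showing that the two linkedness congruences of $C'$, in the sense of Proposition \ref{absorb}, are the \emph{full} congruences on $A_1$ and $B_1$, and to exploit the fact that $C'$ is itself an absorbing subalgebra of $C$. First I would record the structural preliminaries. By the remark following the definition of absorption, the absorptions $A_1\unlhd A$ and $B_1\unlhd B$ may be witnessed by a single $n$-ary term $t$; consequently $A_1\times B_1\unlhd A\times B$ is witnessed by $t$, and since $C'=C\cap(A_1\times B_1)$ is nonempty (being subdirect), the intersection lemma (that $C\cap B\unlhd B$ whenever $C\unlhd A$ and $B\le A$ meet) gives $C'\unlhd C$, again via $t$. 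Writing $\pi_1,\pi_2$ for the projection kernels on $C$ and $\pi_1',\pi_2'$ for their restrictions to $C'$, the hypothesis that $C$ is linked reads $\pi_1\vee\pi_2=1_C$, and the goal is $\pi_1'\vee\pi_2'=1_{C'}$; equivalently, the linkedness congruence $\beta'$ of $C'$ on $B_1$ equals $1_{B_1}$.

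The main engine is to transport links by absorption. Given $b,b'\in B_1$, linkedness of $C$ provides a walk $c_0,c_1,\ldots,c_l$ in $C$ with $\pi_B(c_0)=b$, $\pi_B(c_l)=b'$, in which consecutive edges are $\pi_1$- or $\pi_2$-related and with $c_0,c_l\in C'$. Applying $t$ coordinatewise along this walk in its first argument while holding the remaining $n-1$ arguments constant equal to $c_0\in C'$, I obtain a walk $d_0,\ldots,d_l$ that lies entirely in $C'$ (each $d_j$ has a single argument outside $C'$, so absorption applies), with $d_0=c_0$ by idempotency and $d_l=t(c_l,c_0,\ldots,c_0)$. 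This shows that $b$ is linked in $C'$ to $t(b',b,\ldots,b)$, and, repeating with the varying coordinate in each position and using that $\beta'$ is a congruence, that \emph{every} $\beta'$-block is an absorbing subalgebra of $B_1$ via $t$; equivalently, in the quotient $D=B_1/\beta'$ the term $t$ is a near-unanimity operation, so that every singleton of $D$ absorbs $D$. The symmetric argument on the $A$-side, together with the standard description of a linkedness quotient, shows that $C'/(\alpha'\times\beta')$ is the graph of an isomorphism $A_1/\alpha'\cong D$.

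What remains is to upgrade ``every singleton of $D$ absorbs $D$'' to ``$D$ is trivial'', i.e.\ to rule out a nontrivial linkedness quotient, and this is the main obstacle. The step genuinely uses the Taylor structure and the global linkedness of $C$, since a near-unanimity operation alone does not force $D$ to be a single point (a majority operation on a two-element domain satisfies every singleton-absorption identity). I would argue by induction on $|A|\cdot|B|$. If $A_1=A$ and $B_1=B$ there is nothing to prove, and if $C=A\times B$ then $C'=A_1\times B_1$ is linked; otherwise, assuming say $A_1\subsetneq A$, set $B^{*}=\{\,b\in B:\exists a\in A_1,\ (a,b)\in C\,\}$, which is an absorbing subalgebra of $B$ by Proposition \ref{useful}, satisfies $B_1\subseteq B^{*}$ (as $C'$ is subdirect over $B_1$), and yields $C_1:=C\cap(A_1\times B^{*})\leq_{sp}A_1\times B^{*}$ with $C_1\cap(A_1\times B_1)=C'$. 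Granting that $C_1$ is linked, the instance $(A_1,B^{*},C_1,A_1,B_1)$ has strictly smaller ambient size $|A_1|\cdot|B^{*}|<|A|\cdot|B|$, so the inductive hypothesis forces $C'$ to be linked, a contradiction.

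Thus the heart of the proof is the \emph{one-sided} statement that restricting a linked subdirect product to an absorbing subalgebra in a single coordinate preserves linkedness, i.e.\ that $C_1$ above is linked. This is precisely where the term operations of the Taylor algebra must be used to convert the absorbed walk produced by the engine above into an actual connecting tuple inside $C_1$ that merges the putative linkedness blocks, rather than merely linking $b$ to $t(b',b,\ldots,b)$. I expect this one-sided preservation — which is the content of the cited result of Barto and Kozik \cite{b-k1} — to be the principal difficulty, with the reduction and bookkeeping above being routine by comparison.
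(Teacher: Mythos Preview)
The paper does not supply its own proof of this proposition; it is quoted as ``a result from \cite{b-k1}'' and left at that, so there is no in-house argument to compare against. That said, your proposal is not a proof either, and you are candid about this in your final paragraph.

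The gap is real. Your ``engine'' paragraph is correct --- from $C'\unlhd C$ one does obtain that every $\beta'$-class absorbs $B_1$ via $t$, hence $t$ acts as a near-unanimity operation on $B_1/\beta'$ --- but, as you yourself note, a near-unanimity term on the quotient does not force the quotient to be trivial, and you never return to use this conclusion. Your induction then attempts to reduce the two-sided restriction to a one-sided one; the bookkeeping is fine ($B^{*}\unlhd B$, $B_1\unlhd B^{*}$, $C_1\leq_{sp}A_1\times B^{*}$, and $|A_1|\cdot|B^{*}|<|A|\cdot|B|$ all hold), but the clause ``granting that $C_1$ is linked'' is exactly the proposition in the special case $(A,B,C;\,A_1\unlhd A,\,B^{*}\unlhd B)$ with the \emph{same} ambient triple $(A,B,C)$, so your induction hypothesis on $|A|\cdot|B|$ does not cover it. You then defer that one-sided case to \cite{b-k1}, which is precisely the source the paper cites for the full two-sided statement, making the argument circular. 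What is missing is an independent mechanism --- in the spirit of the Absorption Theorem~\ref{AbsThm}, or a sharper use of the absorbed walks you constructed --- that actually collapses the linkedness quotient; neither the engine nor the induction you set up supplies it.
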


This, in combination with Proposition \ref{useful} implies that if $y$ was a non-relevant variable in the original $(\mathbb{S}_x,\theta)$-test instance, then
the subdirect product
$$\mathbb{S}_x/\theta\times R^+_{x,y}(\mathbb{S}_x)$$
remains linked under the reductions. Since both $\mathbb{S}_x$ and $\mathbb{S}_y$ are absorption-free, the subdirect product is the full direct product. 

Also, if $A$ is an absorption-free subuniverse of some $\mathbb{S}_{x}$ and $a\in \mathbb{S}_y$, which is contained in the passive subinstance in $\mathcal{P}$ generated by $(B,\theta_B)$, if $a\in \mathbb{S}_y$ is linked to more than one $\theta_B$-block, the variable $y$ was non-relevant in the $(B,\theta_B)$-test  instance, so $a$ must be connected to all the strands of the subinstance and any reductions in the $y$-coordinate do not alter that property.

For that reason, the AF-consistency will be preserved since the cyclic CSP on the strands is the same one as in the test instance. Consequently, we can pick any solution strand and replace $\mathcal{I}$ with a smaller subinstance $\mathcal{I}'$; namely, the passive subinstance from $\mathcal{P}$ generated by the $\mathbb{S}_x/\theta$-block of the strand in question.

\section{Conclusion and future directions}

We have presented an algorithm for solving constraint satisfaction problems over finite templates with Taylor polymorphisms, which is based on consistency checks, which solves binary bounded width problems and a different consistency check, the AF-consistency, which solves localized absorption-free problems appearing in the computation tree and, effectively, removes the unsuccessful branches leading to no solutions. 

The AF-consistency check ensures that the variables can be chosen consistently within each connected component and, in the case when the induced CCSP is over a simple affine module, that the generated system of linear equations over a finite field is solvable. Such systems are called \emph{cyclic systems of equations} and have also been studied from the point of view of finite model theory, since their definability in various expansions of fixed point logic is intimately related to the expressibility of the graph isomorphism problem for CFI graphs (see e.g. \cite{pakusa2015}).

Instead of using Gaussian elimination, such systems can be solved using Reingold's algorithm for reachability in undirected graphs, with more details being given implicitly in \cite{egri2014space}, where this approach is used to develop a logspace algorithm for solving conservative CSPs over the digraphs satisfying the so-called, Hagemann-Mitschke equations in their algebra of polymorphisms. It would be interesting to see if the reduction in the case of simple absorption-free congruence skew-free algebras can be carried out in the same way, i.e. whether this reduction can be carried out in advance, using Reingold's algorithm, instead of establishing the rectangulation theorem and then using SLAC. This would be the case if the following question has the affirmative answer:

\begin{openproblem} If $\mathbb{A}$ is a simple idempotent, absorption-free algebra which is congruence skew-free, does $\mathbb{A}$ satisfy Hagemann-Mitschke identities in its algebra of polymorphisms? In terms of tame congruence theory (for more details, the reader is invited to consult \cite{hobby1988structure}), is it the case that
$$\textsc{typ} (V(\mathbb{A}))\subseteq \{2,3\}?$$
In fact, is $V(\mathbb{A})$ a Maltsev variety?
\end{openproblem}

If this is indeed the case, it would suggest that a sufficiently general algorithm for solving CSPs over Taylor templates can be given, which would eschew finer algebraic analysis of the templates but which would, instead, rely solely on local consistency checks and graph connectivity in the binary case. This would also bring into sharp focus the true reason for tractability for Taylor domains: the validity of the Absorption Theorem, which plays the crucial role in the construction of the algorithm presented here. Namely, essentially unary algebras fail the Absorption Theorem rather miserably since every subuniverse of such an algebra is absorption-free and the Absorption Theorem is rendered meaningless. From the general point of view, the failure of absorption results in the inability to establish any kind of rectangulation between ``localized'' solutions, i.e. the solutions over independent subinstances, which makes such a problem much harder to solve, from a naive standpoint.

Another interesting problem would be to investigate whether this algorithm can be expressed in a logic which is a promising candidate for capturing polynomial time. In order for such an extension of first-order logic to exist, in addition to the expected recursion mechanisms, it must be able to express solvability of systems of linear equations over finite fields, and, even more generally, over finite Abelian groups of the type $\mathbb{Z}_{p^k}$. This has proved to be a nontrivial property of any candidate logic. For a more thorough discussion of these topics, see e.g. \cite{dawar2012definability}, \cite{gradel2015characterising}, and \cite{pakusa2015}.

\begin{openproblem} Can this algorithm be defined in any of the following extensions of first-order logic: \textsc{LFP + Rk}, \textsc{PIL+C}, or \textsc{CPT+C}?
\end{openproblem}

\bibliography{digraph_reduction} 
\bibliographystyle{siam}

\end{document}